\newtheorem{theorem}{Theorem}
\newtheorem{problem}{Problem}
\newtheorem{lemma}[theorem]{Lemma}
\newtheorem{corollary}[theorem]{Corollary}
\theoremstyle{definition}
\newtheorem{definition}[theorem]{Definition}
\theoremstyle{remark}
\newtheorem{remark}[theorem]{Remark}
\newtheorem{example}[theorem]{Example}
\definecolor{my_green}{RGB}{77,175,74}
\definecolor{my_red}{RGB}{228,26,28}
\definecolor{my_blue}{RGB}{55,126,184}
\definecolor{my_orange}{RGB}{255,127,0}
\definecolor{my_lila}{RGB}{152,78,163}
\pgfplotsset{compat=newest,
	complexityPlotStyle/.style={
		legend columns=1,
		width = 0.75\textwidth,
		label style={font=\footnotesize},
		xlabel={code rate $R$},
		xmax = 1.0,
		xmin = 0.0,
		tick label style = {font=\footnotesize},
		ylabel={value of complexity coefficient},
		ymax = 1,
		xmajorgrids,
		ymajorgrids,
		grid style=dashed,
		legend style={at={(1,1)},anchor=north east,font=\scriptsize,legend cell align=left, align=left, draw=black},
		line width=1pt,
	},
}
\DeclareMathOperator*{\argmax}{argmax}
\newcommand{\@chapapp}{\relax}%
\DeclareMathOperator{\Cov}{Cov}
\newcommand{\F}{\mathbb{F}}
\newcommand{\Prob}{P} 
\newcommand{\wt}{\text{wt}}
\title{Generic Decoding in the Cover Metric}
	\author[S. Bitzer]{Sebastian Bitzer}
	\address{Department of Electrical and Computer Engineering\\
		Technical University of Munich\\
 Germany
	}
	\email{sebastian.bitzer@tum.de}
	\author[J. Renner]{Julian Renner}
	\address{Department of Electrical and Computer Engineering\\
		Technical University of Munich\\
 Germany
	}
	\email{julian.renner@tum.de}
	\author[A. Wachter-Zeh]{Antonia Wachter-Zeh}
	\address{Department of Electrical and Computer Engineering\\
		Technical University of Munich\\
 Germany
	}
	\email{antonia.wachter-zeh@tum.de}
	\author[V. Weger]{Violetta Weger}
	\address{Department of Electrical and Computer Engineering\\
		Technical University of Munich\\
 Germany
	}
	\email{violetta.weger@tum.de}
\begin{document}

\keywords{Cover Metric, Generic Decoding, Code-Based Cryptography}
\maketitle

\begin{abstract}
In this paper, we study the hardness of decoding a random code endowed with the cover metric. As the cover metric lies in between the    Hamming and rank metric, it presents itself as a promising candidate for code-based cryptography.
We give a polynomial-time reduction from the classical Hamming-metric decoding problem, which proves the NP-hardness of the decoding problem in the cover metric.
We then provide a generic decoder, following the information set decoding idea from Prange's algorithm in the Hamming metric. A study of its cost then shows that the complexity is exponential in the number of rows and columns, which is in contrast to the behaviour in the Hamming metric, where the complexity  grows exponentially in the number of code symbols.
\end{abstract}

\section{Introduction}\label{sec:intro}
The recent advances in quantum computers have accelerated the process of developing efficient and quantum-resilient public-key cryptosystems. In the current NIST standardization process for post-quantum public-key cryptography, lattice-based and code-based systems are dominating. While the only code-based finalist of the competition, \emph{Classic McEliece}, is well-trusted in the community due to its resilience against attacks for more than forty years, at the same time its key sizes are huge compared to nowadays employed public-key cryptography. It is therefore a significant task to develop code-based public-key systems of smaller key sizes. 

One approach towards this goal is to change the underlying metric of the coding scheme from the Hamming metric to other metrics. For example, the rank metric \cite{delsarte, gabidulin, roth2} has shown large potential in the NIST Round 2 candidates RQC \cite{RQC} and ROLLO \cite{ROLLO}. Recent investigations also indicate that the Lee metric is a promising candidate for designing code-based systems \cite{Z4,thomas, leenp}. 

In this paper, we investigate the generic decoding of another metric: the cover metric. Similar to the rank metric, the words that we consider are now matrices instead of vectors as for the Hamming metric. The cover weight of a matrix is the minimum number of lines (rows and columns) that are needed to cover all non-zero elements of the matrix. 
The cover metric has so far mainly been used for crisscross error correction \cite{gabidulin1985optimal, gabidulin1972lattice, roth2, roth,Wachterzeh_Crisscross_journal}. 

A code-based system can be built based on the McEliece principle in any metric, i.e., the structure of a code is hidden such that encryption is done by multiplying with a generator matrix that looks random whereas efficient decryption can be done by the legitimate receiver who knows how to retrieve the structure of the code. For an attacker, the task of recovering the message from the ciphertext is equivalent to decoding in a random code. Thus, in order to build a code-based cryptosystem with the McEliece principle, it is important to investigate \emph{generic decoding}, i.e., decoding in a certain metric without knowing the code that the message was encoded with. 

In this work, we first recall properties of codes in the cover metric, which are $\mathbb{F}_q$-linear codes, such as the size of a cover ball, an asymptotic Gilbert--Varshamov bound, and derive new results such as the density of maximum cover distance codes. We prove that the problem of finding a low cover weight  codeword and the cover decoding problem are NP-complete. 
While for $\mathbb{F}_q$-linear codes endowed with the rank metric, this problem, known as the MinRank problem, is known to be NP-complete as well \cite{minrank},  note that for $\mathbb{F}_{q^m}$-linear codes endowed with the rank metric such a result could not yet be proven~\cite{ranknp}.

As an information set decoding equivalent in the new metric, we provide a Prange-like generic decoding algorithm in the cover metric.
We derive its complexity and compare it with generic decoding in the Hamming metric.
This comparison shows a significant difference:
in the Hamming metric, the complexity grows exponentially in the number of code symbols, whereas in the cover metric the complexity is exponential in the number of lines.

This paper is organized as follows. In Section \ref{sec:preliminaries}, we introduce the notation and recall the  definitions of cover-metric codes that are essential for the remainder of the paper. In Section \ref{sec:bounds}, we provide new density results of maximum cover-distance codes. We show the NP-completeness of the cover-metric decoding problem in Section \ref{sec:NP} and provide the Prange-like generic decoder in Section \ref{sec:generic}. Finally, we conclude this paper in Section \ref{sec:conclusion}.

\section{Preliminaries}\label{sec:preliminaries}
Let us first introduce the notation that we use throughout this paper.
Let $q$ be a prime power and let $\mathbb{F}_q$ denote the finite field with $q$ elements. Let $m$ and $n$ be positive integers and denote by $\mathbb{F}_q^{m \times n}$ the space of $m \times n$ matrices over $\mathbb{F}_q$. By the term lines we denote rows and columns of a matrix. We denote by $\text{GL}_m(\mathbb{F}_q)$ the general linear group, that is all invertible $m \times m$ matrices over $\mathbb{F}_q$. For a matrix $A \in \mathbb{F}_q^{k \times n}$ we denote by $\text{rk}(A)$ its rank and by $\langle A \rangle$ its rowspan.
The identity matrix of size $m$ will be denoted by $\text{Id}_m$. For a set $S$, we will denote its cardinality by $\lvert S \rvert$. 
For a matrix $A \in \mathbb{F}_q^{m \times n}$ and a block $\mathcal{I} \times \mathcal{J} \subseteq \{1, \ldots, m\} \times \{1, \ldots, n\}$ of size $\tilde{m} \times \tilde{n}$ we denote by $A_{\mathcal{I} \times \mathcal{J}} \in \mathbb{F}_q^{\tilde{m} \times \tilde{n}}$ the matrix consisting of the entries of $A$, where the rows are indexed by $\mathcal{I}$ and the columns are indexed by $\mathcal{J}.$ Similarly, for a matrix $A \in \mathbb{F}_q^{m \times n}$, respectively for a vector $a \in \mathbb{F}_q^n$ and a set $I \subseteq \{1, \ldots, n\}$ of size $k$ we denote by $A_I \in \mathbb{F}_q^{m \times k}$ the matrix consisting of the columns of $A$ indexed by $I$, respectively by $a_I \in \mathbb{F}_q^k$ the vector consisting of the entries of $a$ indexed by $I.$ Thus, calligraphic letters are used to  indicate a block, while normal letters indicate a subset of $\{1, \ldots, n\}$.
 \begin{definition}[Matrix Code]
 A linear $[m \times n, k]$ cover metric code $\mathcal{C}$  over $\mathbb{F}_q$ is a $k$-dimensional linear subspace of $\mathbb{F}_q^{m \times n}$.
 \end{definition}
Let $\mathcal{C} \subseteq \mathbb{F}_q^{m \times n}$ be linear code of dimension $k$. 
Then, we say that $\mathcal{C}$ has generators $G_1, \ldots, G_k  \in \mathbb{F}_q^{m \times n}$ if 
 $$\mathcal{C} = \left\{ \sum\limits_{i=1}^k u_i G_i \mid u_i \in \mathbb{F}_q \right\}.$$
 Analogously, we may write any codeword $C=  \sum\limits_{i=1}^k u_i G_i \in \mathbb{F}_q^{m \times n}$ as
 $$ C= \begin{pmatrix} u_1 \text{Id}_m &  u_2\text{Id}_m &   \cdots & & u_k \text{Id}_m \end{pmatrix} \begin{pmatrix} G_1 \\ G_2 \\ \vdots \\ G_k \end{pmatrix}.$$
  We define the dual code with respect to the trace product, that is for matrices $C,D \in \mathbb{F}_q^{m \times n}$ we have
 $$\langle C,D \rangle = \text{Tr}(CD^\top),$$ where $\text{Tr}(A)$ denotes the trace of  a matrix $A.$
 Then, the dual code of $\mathcal{C} \subseteq \mathbb{F}_q^{m \times n}$ of dimension $k$ is defined as 
 $$\mathcal{C}^\perp = \{ D \in \mathbb{F}_q^{m \times n} \mid \langle C,D \rangle =0 \ \forall C \in \mathcal{C}\}.$$
 The dual code is of dimension $mn-k$. 
 We say that $\mathcal{C}^\perp$ has generators $H_1, \ldots,$ $ H_{mn-k} \in \mathbb{F}_q^{m \times n}$ if 
 $$\mathcal{C} = \left\{ C \in \mathbb{F}_q^{m \times n} \mid  \langle C, H_i \rangle = 0 \  \forall i \in \{1, \ldots, mn-k\} \right\}.$$
 A syndrome of $E \in \mathbb{F}_q^{m \times n}$ through $H_1, \ldots, H_{mn-k}$ is then  defined as $$( \langle H_1, E \rangle, \ldots, \langle H_{mn-k},E \rangle ) \in \mathbb{F}_q^{mn-k}.$$
 \begin{definition}[Cover]
Let $A =\left( a_{i,j} \right)_{\substack{1 \leq i \leq  m , \\ 1 \leq j \leq n}} \in \mathbb{F}_q^{m \times n}$. We say that $\Cov(A) \subset \{1, \ldots, m+n\}$ is a cover of $A$ if for all $i \in \{1, \ldots, m\}$ and for all $j \in \{1, \ldots, n\}$ we have that, if $a_{i,j} \neq 0$, then either $i \in \Cov(A)$ or $j+m \in \Cov(A)$.
 \end{definition}
 Observe that a cover of a matrix does not have to be unique. In fact, we call a cover minimal if it is a cover of minimal size, i.e., for $A  \in \mathbb{F}_q^{m \times n}$, the cover $C \subset \{1, \ldots, m+n\}$ is a minimal cover of $A$, if 
 $$ \lvert C \rvert = \min\{ \lvert \Cov(A) \rvert \quad  \mid \Cov(A) \quad \text{is a cover of} \quad A\}. $$
The cover weight of a matrix is then the size of a minimal cover of this matrix.
 \begin{definition}[Cover Weight]
 Let $A  \in \mathbb{F}_q^{m \times n}$. The cover weight of $A$ is defined as
 $$\textup{wt}_C(A) = \min\{ \lvert \Cov(A) \rvert \quad  \mid \Cov(A) \quad \text{is a cover of} \quad A\}. $$
 \end{definition}
 With this we can define the cover distance of two matrices, which defines a metric on $\mathbb{F}_q^{m \times n}$.
  \begin{definition}[Cover Distance]
 Let $A,B  \in \mathbb{F}_q^{m \times n}$. The cover distance between $A$ and $B$ is defined as
 $$d_C(A,B) = \textup{wt}_C(A-B). $$
 \end{definition}
We can also define the minimum cover distance of a code.
 \begin{definition}[Minimum Cover Distance]
Let $\mathcal{C}$  be a linear $[m \times n, k]$ cover metric code  over $\mathbb{F}_q$. The minimum cover distance of $\mathcal{C}$ is defined as
$$d_C= \min \{\textup{wt}_C(A) \mid A \in \mathcal{C}, A \neq 0\}.$$
 \end{definition}
We sometimes refer to an  $[m \times n, k]$ cover-metric code with minimum cover distance $d$ as an $[m \times n, k,d]$ cover-metric code.

For $x \in \mathbb{F}_{q^m}^n$, and $\Gamma=\{\gamma_1, \ldots, \gamma_m\}$ a basis of $\mathbb{F}_{q^m}$ over $\mathbb{F}_q$, define  $\Gamma(x) \in \mathbb{F}_q^{m \times n}$, to be the associated matrix to $x$, i.e. such that
$$x_i= \sum_{j=1}^m \Gamma_{i,j}(x) \gamma_j,$$ for all $i \in \{1, \ldots, n\}.$
Then, one clearly has that 
$$\wt_R(\Gamma(x)) \leq \wt_C(\Gamma(x)) \leq \wt_H(x).$$
We can also define the cover metric directly over $\mathbb{F}_{q^m}^n$ and consider $\mathbb{F}_{q^m}$-linear codes.
The weight of a vector $x \in \mathbb{F}_{q^m}^n$ is then the cover weight of the associated matrix $\Gamma(x) \in \mathbb{F}_q^{m \times n}$, i.e.,
$$\wt_C(x)= \wt_C(\Gamma(x)).$$
Clearly, we still have the connection to the other two metrics, that is for $x \in \mathbb{F}_{q^m}^n$ we have
$$\wt_R(x) \leq \wt_C(x) \leq \wt_H(x).$$
However, the cover weight of $x \in \mathbb{F}_{q^m}$ is highly dependent on the chosen basis $\Gamma$.
 \begin{example}
 Let us consider $\mathbb{F}_8=\mathbb{F}_2[\alpha]$ for some primitive element $\alpha$ and the vector $x = (\alpha+1,\alpha+1,0)$. For the basis $\Gamma=\{1,\alpha,\alpha^2\}$ we have that  the associated matrix
 $\Gamma(x)=\begin{pmatrix} 1 & 1 & 0 \\ 1 & 1 & 0 \\ 0 & 0 & 0 \end{pmatrix}$ has cover weight 2. However, for the basis $\Gamma'=\{1+\alpha, \alpha, \alpha^2\}$ we get the associated matrix
 $\Gamma'(x)= \begin{pmatrix} 1 & 1 & 0 \\ 0 & 0 & 0 \\ 0 & 0 & 0 \end{pmatrix}$ of cover weight 1.
 \end{example}
In fact, even more can be shown. 
\begin{lemma}
 Let $x \in \mathbb{F}_{q^m}^n$. Then there exists a basis $\Gamma$, such that $$\textup{wt}_C(\Gamma(x))=\textup{rk}(\Gamma(x))=\textup{wt}_R(x).$$
\end{lemma}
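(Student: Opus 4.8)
The plan is to exploit the fact that $\wt_R(x)$ does not depend on the chosen basis and equals the $\mathbb{F}_q$-dimension $r$ of the subspace spanned by the entries of $x$, and then to pick a basis adapted to that subspace so that the associated matrix has as few nonzero rows as possible. Since the excerpt already provides the inequality $\wt_R(\Gamma(x)) = \rk(\Gamma(x)) \le \wt_C(\Gamma(x))$ for \emph{every} basis, it suffices to exhibit one basis $\Gamma$ with $\wt_C(\Gamma(x)) \le r$; equality throughout the chain then follows automatically.

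First I would set $r = \wt_R(x)$ and let $V \subseteq \mathbb{F}_{q^m}$ be the $\mathbb{F}_q$-subspace spanned by the entries $x_1, \dots, x_n$. By the definition of the rank weight one has $\dim_{\mathbb{F}_q} V = \rk(\Gamma(x)) = r$ for any reference basis. The key construction is then to choose a basis $\gamma_1, \dots, \gamma_r$ of $V$ and extend it to a full basis $\Gamma = \{\gamma_1, \dots, \gamma_r, \gamma_{r+1}, \dots, \gamma_m\}$ of $\mathbb{F}_{q^m}$ over $\mathbb{F}_q$, which is always possible since $V$ is a subspace of dimension $r \le m$.

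The main step is to read off the shape of $\Gamma(x)$ in this basis. Because every entry $x_i$ lies in $V$, which is spanned by $\gamma_1, \dots, \gamma_r$, its expansion $x_i = \sum_{j=1}^m \Gamma_{i,j}(x)\gamma_j$ involves only $\gamma_1, \dots, \gamma_r$, so $\Gamma_{i,j}(x) = 0$ for all $i$ whenever $j > r$. In matrix terms, rows $r+1, \dots, m$ of $\Gamma(x)$ are identically zero, so the first $r$ rows already cover every nonzero entry. This yields a cover of size $r$, whence $\wt_C(\Gamma(x)) \le r$. Combining this with $\rk(\Gamma(x)) = r \le \wt_C(\Gamma(x))$ forces $\wt_C(\Gamma(x)) = \rk(\Gamma(x)) = \wt_R(x)$, as claimed.

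I do not expect a serious obstacle: the argument is essentially the observation that a change of expansion basis acts as an invertible $\mathbb{F}_q$-row transformation of $\Gamma(x)$, i.e.\ left multiplication by an element of $\mathrm{GL}_m(\mathbb{F}_q)$, so one can always bring the matrix into a form with exactly $r$ nonzero rows, for instance by row reduction. The only points requiring care are to justify that basis-adaptedness genuinely forces the vanishing of the bottom rows, which is immediate from $x_i \in V$, and to invoke the already-established lower bound $\rk \le \wt_C$ so that no separate proof of $\wt_C \ge r$ is needed.
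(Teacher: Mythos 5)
Your proposal is correct and takes essentially the same approach as the paper: the paper performs Gaussian elimination on the associated matrix via some $U \in \mathrm{GL}_m(\mathbb{F}_q)$ and reinterprets this row reduction as a change of basis, which is exactly your basis adapted to $V = \langle x_1, \ldots, x_n \rangle_{\mathbb{F}_q}$, since the nonzero rows of the row-reduced matrix correspond precisely to a basis of $V$ extended to $\mathbb{F}_{q^m}$. If anything, your write-up is more complete than the paper's terse proof, as you explicitly invoke the lower bound $\textup{rk}(\Gamma(x)) \leq \textup{wt}_C(\Gamma(x))$ and explain why the adapted basis yields cover weight at most $r$, steps the paper leaves implicit.
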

\begin{proof}
Finding the basis $\Gamma$ such that the associated matrix $\Gamma(x)$ has the lowest cover weight can be achieved by performing Gaussian elimination to get a row-reduced $\Gamma(x)'$. This operation is equivalent to  multiplying the associated matrix with an invertible matrix $U \in \text{GL}_{m}(\mathbb{F}_q)$.  Finally, since such operation represents a change of basis, there exists a basis $\Gamma'$,  such that the associated matrix has cover weight equal to the rank of $x$ in the extension field. 
\end{proof}
Since  the cover weight over the extension field would depend on the chosen basis and there always exists a basis which would recover the rank weight, we will rather remain in the base field $\mathbb{F}_q$ and consider matrix codes than pursuing the $\mathbb{F}_{q^m}$-linear approach.

 To describe the operations which are performed on the generators and the received matrix for generic decoding of a matrix code, it is more convenient when the associated vector code is considered.
 To this end, we define the bijective map
 \begin{align*} \varphi: \F_q^{m\times n} & \to \F_q^{m\cdot n}, \\ C & \mapsto c,
 \end{align*}
 which concatenates the rows of 
 $C = \begin{pmatrix}c_1^T,\ldots,c_m^T \end{pmatrix}^T$ into a single vector $c =(c_1,\ldots,c_m)$.
 Then, we refer to $c$ as the row vector form of $C$.
 The map $\varphi$ is extended to 
 \begin{align*} 
 \bar{\varphi}:  \left(\F_q^{m\times n}\right)^k &\to \F_q^{k\times(m\cdot n)}, \\  G_1,\ldots,G_k & \mapsto \bar{G}, \end{align*}
 by performing $\varphi$ matrix-wise on the input and stacking the resulting vectors. Then, it holds that
 $$\left\{ C= \sum\limits_{i=1}^k u_i G_i \mid u_i \in \mathbb{F}_q \right\} = \varphi^{-1}\left(\left\{ c = u \bar{G} \mid u \in \mathbb{F}_q^k, \bar{G}=\bar{\varphi}(G_1,\ldots,G_k) \right\}\right).$$
 Hence, $\bar{\varphi}$ allows for an unambiguous representation of the matrix code in a vector form.
 The vector form allows the introduction of an information set for the matrix code.
 \begin{definition}[Information Block] 
 Let $\mathcal{C} \subseteq \mathbb{F}_q^{m \times n}$ be a code  of dimension $k$ and $C\in\mathcal{C}$.
 Then, we refer to  $\mathcal{I}\times \mathcal{J}\subset \{1, \ldots, m\}\times\{1, \ldots, n\}$ of size $\tilde{m}\times\tilde{n}$ as an information block of $\mathcal{C}$, if 
 $$\lvert \{C_{\mathcal{I}\times\mathcal{J}} \mid C \in \mathcal{C} \} \rvert = \lvert \mathcal{C} \rvert.$$
 \end{definition}
 Note that not all codes of dimension $k$ have the same minimal size of an information block.
In particular, not every code has an information block of size $k$.
 \begin{example}
 Let $\mathcal{C} = \langle \begin{pmatrix} 1 & 0 \\ 0 & 0 \end{pmatrix}, \begin{pmatrix} 0 & 0 \\ 0 & 1 \end{pmatrix} \rangle \subseteq \mathbb{F}_2^{2 \times 2}$. 
 Then, $\mathcal{C}$ has an information block $\{1,2 \} \times \{1,2\}$ of size $4$ and no smaller information block exists.
 \end{example}
The definition of information block then also gives rise to a definition of systematic form for the generators.
\begin{definition}[Systematic Vector Form]\label{def:systematic_form}
Let $G_1, \ldots, G_k \in \mathbb{F}_q^{m \times n}$ be the generators of a code $\mathcal{C} \subset \mathbb{F}_q^{m \times n}$ of dimension $k$ and $\bar{G}$ the associated generator matrix in row vector form.
Let $\mathcal{I} \times \mathcal{J}$ be an information block for $\mathcal{C}$ of size $\tilde{m} \times \tilde{n}.$ 
Let $\tilde{G}_i\in\mathbb{F}_q^{\tilde{m}\times\tilde{n}}$  denote the submatrix formed from $G_i$ by considering only the rows of $\mathcal{I}$ and columns of $\mathcal{J}.$
We refer to the row vector form of $\tilde{G}_1,\ldots,\tilde{G}_k$ as $\tilde{G}\in\mathbb{F}_q^{k\times\tilde{m}\tilde{n}}$.
As $\mathcal{I} \times \mathcal{J}$ forms an information block, there exists a set $I \subset\{1, \ldots, \tilde{m}\tilde{n}\}$ of size $k$, such that $\tilde{G}_I$ has full rank.
 Then, the systematic row vector form of $G_1, \ldots, G_k$ is  given by 
 $$\bar{G}_s=\tilde{G}_I^{-1}\cdot\bar{G} \in \mathbb{F}_q^{k \times m n}.$$
\end{definition}
\section{Bounds in the Cover Metric and Density Results}\label{sec:bounds}
Observe that the Singleton bound in the cover metric is analogue to the Singleton bound for $\mathbb{F}_q$-linear rank-metric codes. 
\begin{theorem}[Cover Analogue of the Singleton Bound, Theorem 1, \cite{roth2}]
 Let $\mathcal{C}$ be an $[m \times n, k, d]$ cover metric code over $\mathbb{F}_q$, then
 $$k \leq \max\{n,m\}(\min\{n,m\}-d+1).$$
 \end{theorem}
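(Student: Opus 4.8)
The plan is to mimic the classical puncturing proof of the Singleton bound, replacing the Hamming-support argument by the elementary observation that any matrix whose nonzero entries lie in a few parallel lines has correspondingly small cover weight. Since transposing a matrix interchanges the roles of rows and columns but leaves the size of any minimal cover unchanged, the cover weight is transposition-invariant; hence I may assume without loss of generality that $m \leq n$, replacing $\mathcal{C}$ if necessary by the transposed code $\{C^\top \mid C \in \mathcal{C}\}$, which is an $[n \times m, k, d]$ cover-metric code with the same parameters.

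First I would introduce the $\mathbb{F}_q$-linear projection $\pi \colon \mathbb{F}_q^{m \times n} \to \mathbb{F}_q^{(m-d+1) \times n}$ that deletes the first $d-1$ rows of a matrix; note that since the cover weight never exceeds $\min\{m,n\}=m$, we have $d-1 \leq m-1$, so this map is well defined. The key step is to show that $\pi$ is injective on $\mathcal{C}$. Suppose $C, C' \in \mathcal{C}$ satisfy $\pi(C) = \pi(C')$. Then $C - C'$ vanishes outside its first $d-1$ rows, so the set $\{1, \ldots, d-1\} \subset \{1, \ldots, m+n\}$ is a cover of $C - C'$, and therefore $\wt_C(C - C') \leq d-1$. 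Because $C - C' \in \mathcal{C}$ and every nonzero codeword has cover weight at least $d$, this forces $C - C' = 0$, i.e. $C = C'$.

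Injectivity then yields $q^k = \lvert \mathcal{C} \rvert = \lvert \pi(\mathcal{C}) \rvert \leq \lvert \mathbb{F}_q^{(m-d+1) \times n} \rvert = q^{(m-d+1)n}$, and comparing exponents gives $k \leq n(m-d+1) = \max\{m,n\}(\min\{m,n\} - d + 1)$, which is the asserted bound.

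I expect the only genuinely subtle point to be the choice of \emph{which} $d-1$ lines to delete, as this is what produces the asymmetric factor $\max\{m,n\}$. Deleting $d-1$ rows leaves a space of dimension $(m-d+1)n$, whereas deleting $d-1$ columns would leave dimension $m(n-d+1)$; under the assumption $m \leq n$ one checks that $(m-d+1)n \leq m(n-d+1)$, so removing rows (the shorter direction) gives the tighter bound and is the correct choice. Everything else is routine linear algebra.
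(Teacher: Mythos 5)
Your proof is correct: the transposition reduction (using that cover weight is invariant under transposition), the projection deleting $d-1$ rows, the injectivity argument via the minimum cover distance, and the final dimension count $q^k \leq q^{(m-d+1)n}$ together give exactly the stated bound, including the correct identification of which direction yields the factor $\max\{n,m\}$. Note that the paper itself offers no proof of this statement — it is quoted as Theorem 1 of the cited reference \cite{roth2} — and your puncturing argument is essentially the standard one used there, so your approach agrees with the source rather than diverging from it.
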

 We denote by $F_C(n,m,q,w)$ the size of the cover-metric sphere, i.e.,
$$F_C(n,m,q,w)= \lvert \{ A \in \mathbb{F}_q^{m \times n} \ \mid \ \wt_C(A) =w\} \rvert$$
and similarly, we denote by $V_C(n,m,q,w)$  the size of the cover-metric ball, i.e.,
 $$V_C(n,m,q,w) = \lvert \{ A \in \mathbb{F}_q^{m \times n} \ \mid \ \wt_C(A) \leq w\} \rvert. $$
 The size of the cover ball is then 
 $$V_C(n,m,q,w)  = \sum_{i=0}^w F_C(n,m,q,i),$$
 which is needed for the analogue of the Gilbert-Varshamov bound in the cover metric. Unfortunately, there is no exact formula for the cover-metric sphere or ball. Instead, we need to work with a lower and upper bound on the size of the ball. 
 Recall that for $q\geq 2$ and $p \in [0,1]$ the $q$-ary entropy function is defined as
 $$H_q(p) = p\log_q(q-1)- p\log_q(p) -(1-p)\log_q(1-p).$$
 \begin{lemma}[\cite{liu2018list}]\label{lem:ballsize}
Let $n\leq m$ be positive integers and $q$ be a prime power. Let $0 < d \leq n.$
Then
$$ q^{md} \leq V_C(n,m,q,d) \leq (d+1) 2^{(m+n)H_2(d/(m+n))} q^{md}.$$
\end{lemma}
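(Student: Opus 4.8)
The plan is to prove the two inequalities separately: the lower bound by an explicit construction, the upper bound by a union bound over covers followed by a standard entropy estimate. For the lower bound I would exhibit $q^{md}$ distinct matrices of cover weight at most $d$. Since $d \le n$, I can fix the first $d$ columns and let every matrix supported on them range freely over $\mathbb{F}_q$ in those $md$ positions, with the remaining columns forced to zero. Each such matrix is covered by the chosen $d$ columns, hence has $\wt_C \le d$, and there are exactly $q^{md}$ of them, giving $q^{md} \le V_C(n,m,q,d)$.

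For the upper bound, the starting observation is that any $A$ with $\wt_C(A) \le d$ admits a cover consisting of some $a$ rows and $b$ columns with $a+b \le d$, and $A$ is then supported on these lines. Counting the matrices supported on a fixed choice of $a$ rows and $b$ columns amounts to leaving the complementary $(m-a)\times(n-b)$ block zero, so there are $q^{mn-(m-a)(n-b)}$ of them; writing $mn-(m-a)(n-b)=mb+an-ab$ and summing over all admissible $(a,b)$ with the binomial multiplicities yields
$$V_C(n,m,q,d) \le \sum_{a+b\le d} \binom{m}{a}\binom{n}{b}\, q^{\,mb+an-ab}.$$

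The crux of the argument, and the step I expect to be the main obstacle, is to show that the exponent is maximised at $(a,b)=(0,d)$, that is, $mb+an-ab \le md$ for every admissible pair. I would prove this by rewriting $md-(mb+an-ab)=m(d-b)-a(n-b)$ and noting that $a \le d-b$ while $n-b \le n \le m$, so that $a(n-b)\le (d-b)m$; this is exactly the point at which the hypothesis $n\le m$ enters. Pulling the factor $q^{md}$ out of the sum leaves the purely combinatorial quantity $\sum_{a+b\le d}\binom{m}{a}\binom{n}{b}$, which by the Vandermonde identity collapses to $\sum_{s=0}^{d}\binom{m+n}{s}$.

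Finally I would close with a standard entropy estimate. Because $d\le n\le (m+n)/2$, every term with $s\le d$ satisfies $\binom{m+n}{s}\le\binom{m+n}{d}$, so the sum is at most $(d+1)\binom{m+n}{d}$, and the classical bound $\binom{N}{d}\le 2^{N H_2(d/N)}$ with $N=m+n$ gives
$$V_C(n,m,q,d)\le (d+1)\,2^{(m+n)H_2(d/(m+n))}\,q^{md},$$
as required. The only genuinely delicate point is the exponent maximisation; the Vandermonde collapse and the entropy bound are routine.
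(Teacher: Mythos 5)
Your proof is correct: the lower bound by filling $d$ columns, the upper bound by a union bound over covers with the exponent maximisation $mb+an-ab\le md$ (which is exactly where $n\le m$ is needed), Vandermonde, and the entropy estimate $\binom{N}{d}\le 2^{NH_2(d/N)}$ all check out. Note that the paper itself gives no proof of this lemma—it is quoted from the cited reference—so your argument supplies the missing details, and it is the standard counting-by-covers proof one would expect there.
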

Let $A(q,m,n,d)$ denote the size of the largest code over $\mathbb{F}_q^{m \times n}$ of minimum cover distance $d$.
\begin{theorem}
Let $q$ be a prime power and $0<d \leq n \leq m$ be positive integers, then 
 $$ A(q,m,n,d) \geq \frac{q^{mn}}{V_C(n,m,q,d-1)}. $$
 \end{theorem}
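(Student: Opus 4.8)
The plan is to prove this Gilbert--Varshamov-type bound by a standard greedy sphere-covering argument, exploiting that the cover distance $d_C$ is a translation-invariant metric on $\mathbb{F}_q^{m\times n}$.

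First I would let $\mathcal{C}\subseteq\mathbb{F}_q^{m\times n}$ be a code of minimum cover distance at least $d$ that is maximal with respect to inclusion, and set $M=\lvert\mathcal{C}\rvert$. Such a maximal $\mathcal{C}$ exists because $\mathbb{F}_q^{m\times n}$ is finite; concretely, one can build it greedily, starting from the all-zero matrix and repeatedly adjoining any matrix whose cover distance to every already chosen element is at least $d$, until no further matrix can be added. Since $A(q,m,n,d)$ is by definition the size of a largest code of minimum cover distance $d$, it then suffices to show $M\geq q^{mn}/V_C(n,m,q,d-1)$, as this gives $A(q,m,n,d)\geq M$.

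The key step is to turn maximality into a covering statement. By construction no matrix can be adjoined to $\mathcal{C}$ without violating the distance condition, so for every $X\in\mathbb{F}_q^{m\times n}$ there is a codeword $C\in\mathcal{C}$ with $d_C(X,C)\leq d-1$. Equivalently, the cover balls of radius $d-1$ centred at the codewords cover the whole space,
$$\mathbb{F}_q^{m\times n}=\bigcup_{C\in\mathcal{C}}\{X\in\mathbb{F}_q^{m\times n}\mid d_C(X,C)\leq d-1\}.$$
Here I would use that $d_C(X,C)=\wt_C(X-C)$, so each such ball is the translate $C+\{E\in\mathbb{F}_q^{m\times n}\mid\wt_C(E)\leq d-1\}$ and therefore contains exactly $V_C(n,m,q,d-1)$ elements, independently of its centre. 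Counting the union and bounding it by the sum of the individual ball sizes then yields
$$q^{mn}=\lvert\mathbb{F}_q^{m\times n}\rvert\leq\sum_{C\in\mathcal{C}}V_C(n,m,q,d-1)=M\cdot V_C(n,m,q,d-1),$$
and rearranging gives the claimed inequality.

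I do not expect a genuine obstacle here, since the argument is purely combinatorial and relies only on $d_C$ being a metric together with the hypothesis $d\geq 1$ (so that $V_C(n,m,q,d-1)$ is defined); the constraints $d\leq n\leq m$ are inherited from the surrounding setup and play no essential role. The one point that must be stated carefully is translation invariance of the cover weight, i.e.\ that $\wt_C(X-C)$ depends on the difference $X-C$ alone, so that every radius-$(d-1)$ cover ball has the common cardinality $V_C(n,m,q,d-1)$; this is immediate from the definition of $d_C$ through $\wt_C$. It is worth remarking that the maximal code produced by the greedy procedure need not be linear, which is consistent with the statement bounding the size of the largest code rather than the dimension of a linear one.
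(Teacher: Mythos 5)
Your proof is correct. The paper states this theorem without proof (it is the classical Gilbert--Varshamov sphere-covering bound transported to the cover metric), and your argument --- a maximal code whose radius-$(d-1)$ balls must cover $\mathbb{F}_q^{m\times n}$, combined with translation invariance of $\wt_C$ so that every ball has size $V_C(n,m,q,d-1)$ --- is exactly the standard argument the paper implicitly relies on.
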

In the case $n \leq m$ (thus if we let $n$ go to infinity, $m$ does as well), we have that 
$$\lim\limits_{n \to \infty} \frac{1}{mn} \log_q(V_C(n,m,q,d)) = \delta,$$ where $\delta$ denotes the relative minimum distance, i.e., $\delta=d/n.$  
Let us  consider the maximal information rate 
$$R(n,m,d,q) := \frac{1}{nm} \log_{q}(A(q,m,n,d)),$$
for $0 \leq d \leq n $. We define the relative minimum distance to be  $\delta := \frac{d}{n}.$
\begin{theorem}[Asymptotic Gilbert-Varshamov Bound]\label{asympt_GV}
For $n\leq m$, it holds that
$$\liminf\limits_{n \to \infty}R(n,m,\delta n, q) \geq \lim\limits_{n \to \infty} \left( 1 - \frac{1}{nm} \log_{q}(  V_C( n,m,q, \delta n)  ) \right) = 1-\delta.$$
\end{theorem}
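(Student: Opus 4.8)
The plan is to combine the finite-length Gilbert--Varshamov bound (the preceding theorem) with the ball-size estimates of Lemma~\ref{lem:ballsize}, and then pass to the limit. The equality on the right of the statement is essentially a restatement of the asymptotic ball-size limit recorded just before the theorem, so the real content is the inequality on the left, and both follow from a single squeeze argument.

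First I would apply the Gilbert--Varshamov bound $A(q,m,n,d) \geq q^{mn}/V_C(n,m,q,d-1)$ with $d = \delta n$. Taking $\frac{1}{nm}\log_q$ of both sides gives immediately
$$R(n,m,\delta n, q) \geq 1 - \frac{1}{nm}\log_q\bigl(V_C(n,m,q,\delta n - 1)\bigr),$$
so the whole statement reduces to controlling the asymptotics of $\frac{1}{nm}\log_q(V_C(n,m,q,w))$ for $w$ growing linearly in $n$ (here with $w = \delta n - 1$, and separately with $w = \delta n$ for the claimed equality).

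The core step is to squeeze this quantity using Lemma~\ref{lem:ballsize}. From the lower bound $V_C(n,m,q,w) \geq q^{mw}$ one gets $\frac{1}{nm}\log_q(V_C) \geq w/n$. From the upper bound $V_C(n,m,q,w) \leq (w+1)\, 2^{(m+n)H_2(w/(m+n))} q^{mw}$ one gets
$$\frac{1}{nm}\log_q(V_C) \leq \frac{w}{n} + \frac{\log_q(w+1)}{nm} + \frac{(m+n)H_2\bigl(w/(m+n)\bigr)}{nm}\log_q 2.$$
The delicate part is checking that the two correction terms vanish as $n \to \infty$. For $w \sim \delta n$, the term $\log_q(w+1)/(nm)$ is of order $\log n / n^2 \to 0$ since $m \geq n$; and because $m \geq n$ one has $(m+n)/(nm) = 1/n + 1/m \leq 2/n \to 0$ while $H_2$ stays bounded by $1$, so the entropy term also tends to $0$. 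Hence $\frac{1}{nm}\log_q(V_C(n,m,q,w)) \to \delta$ whenever $w/n \to \delta$.

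Applying this squeeze with $w = \delta n$ gives the claimed equality $\lim_{n\to\infty}\bigl(1 - \frac{1}{nm}\log_q V_C(n,m,q,\delta n)\bigr) = 1 - \delta$, and applying it with $w = \delta n - 1$ (which also satisfies $w/n \to \delta$) turns the Gilbert--Varshamov inequality above into $\liminf_{n\to\infty} R(n,m,\delta n,q) \geq 1 - \delta$. Combining the two yields the theorem. The only genuinely nontrivial point is the uniform control of the entropy correction term, which relies precisely on the hypothesis $n \leq m$ forcing $(m+n)/(nm) \to 0$; without it the term need not be negligible.
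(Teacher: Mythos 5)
Your proof is correct and follows the same route the paper takes (implicitly): combine the finite-length Gilbert--Varshamov bound with the ball-size estimates of Lemma~\ref{lem:ballsize} and pass to the limit. The paper states the ball-size limit $\lim_{n\to\infty}\tfrac{1}{nm}\log_q V_C(n,m,q,\delta n)=\delta$ without proof just before the theorem; your squeeze argument, using $n\leq m$ to make the entropy and logarithmic correction terms vanish, is exactly the justification it omits.
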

Recall that in complexity theory we write $f(n) = \Omega(g(n))$, if $\limsup\limits_{n \to \infty} \left | \frac{f(n)}{g(n)} \right | >0$.
For example, $f(n) = \Omega(n)$ means that $f(n)$ grows at least polynomially in $n$.
 \begin{theorem}\label{thm:whp_GV}
  For every prime power $q$, $\delta \in [0,1), 0 \leq \varepsilon < 1-H_q(\delta)$ and sufficiently large positive integer $n\leq m$, the following holds for 
  $$k = \left\lceil \left(m - m\delta  +\frac{m}{n} -\frac{m+n}{n}H_2\left(\frac{\delta n-1}{m+n}\right)- \frac{\log_q(\delta n)}{n} -\varepsilon\right) n \right\rceil.$$
If we choose $G_1, \ldots, G_k \in \mathbb{F}_q^{m \times n}$ uniform at random, then the linear code generated by $G_1, \ldots, G_k$ has rate at least $1-\delta - \varepsilon$ and relative minimum cover distance at least $\delta$ with probability $1-e^{- \Omega(n)}$.  
 \end{theorem}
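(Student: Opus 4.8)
The plan is to run the standard first-moment (union-bound) argument underlying probabilistic Gilbert--Varshamov results, feeding in the ball-size estimate of Lemma \ref{lem:ballsize}. First I would pass to the row-vector form: sampling $G_1,\dots,G_k$ uniformly in $\F_q^{m\times n}$ is the same as sampling $\bar G=\bar\varphi(G_1,\dots,G_k)\in\F_q^{k\times mn}$ with independent uniform entries. The one probabilistic fact I need is that for every fixed nonzero message $u\in\F_q^k\setminus\{0\}$ the vector $u\bar G$ is uniformly distributed on $\F_q^{mn}$, so that $\varphi^{-1}(u\bar G)$ is a uniform element of $\F_q^{m\times n}$ and
$$\Prob\bigl[\wt_C(\varphi^{-1}(u\bar G))\le w\bigr]=\frac{V_C(n,m,q,w)}{q^{mn}}.$$

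Next I would fix $w:=\delta n-1$, the largest integer weight strictly below $\delta n$, and define the bad event $B$ as ``some nonzero message has image of cover weight $\le w$''. A union bound over the $q^k-1$ nonzero messages gives $\Prob[B]<q^k\,V_C(n,m,q,w)/q^{mn}$. I would then insert the upper bound of Lemma \ref{lem:ballsize}, using $2\le q$ to replace $2^{(m+n)H_2(\cdot)}$ by $q^{(m+n)H_2(\cdot)}$, so the whole right-hand side becomes $(w+1)$ times a single power of $q$. Taking $\log_q$ turns the estimate into $k+\log_q(w+1)+(m+n)H_2\!\left(\tfrac{\delta n-1}{m+n}\right)+mw-mn$. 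The content of the theorem is that the displayed value of $k$ has been reverse-engineered precisely so that, after substituting $w=\delta n-1$ (hence $mw-mn=-mn(1-\delta)-m$ and $\log_q(w+1)=\log_q(\delta n)$), every term cancels against the corresponding term of $k$ and only $-\varepsilon n$ survives, plus the $+1$ coming from the ceiling. Thus $\Prob[B]<q^{\,1-\varepsilon n}$, which is $e^{-\Omega(n)}$ for $\varepsilon>0$.

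It then remains to read off both conclusions on the complement of $B$. Since $w=\delta n-1\ge 0$, the event $B$ in particular contains every nonzero $u$ with $u\bar G=0$ (such a $u$ yields a weight-$0$ image), so on $B^c$ the encoding map $u\mapsto u\bar G$ is injective; hence the code has dimension exactly $k$, and simultaneously every nonzero codeword has cover weight $\ge\delta n$, i.e.\ relative minimum cover distance at least $\delta$. For the rate I would check that $k/(mn)=(1-\delta)+o(1)$: dividing the defining expression for $k$ by $mn$, the corrections $\tfrac1n$, $\tfrac{(m+n)H_2(\cdot)}{mn}\le\tfrac1n+\tfrac1m$ (using $H_2\le1$), $\tfrac{\log_q(\delta n)}{mn}$ and $\tfrac{\varepsilon}{m}$ are all $o(1)$ as $n\to\infty$ with $n\le m$. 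Consequently $k/(mn)\ge 1-\delta-\varepsilon$ for all sufficiently large $n$, which gives the claimed rate; both statements therefore hold on $B^c$, an event of probability $1-e^{-\Omega(n)}$.

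The main obstacle is bookkeeping rather than a conceptual difficulty: one must verify the exact cancellation in the exponent, keep track of the ceiling and of the rounding of $w$ against $\delta n$, and confirm that the hypotheses ($\varepsilon<1-H_q(\delta)$, $\delta\in(0,1)$, $n\le m$) guarantee that the formula produces an admissible dimension $1\le k\le mn$ together with a strictly negative surviving exponent. I would also treat the degenerate case $\delta=0$ separately, since there $\log_q(\delta n)$ is undefined and $w<0$; in that case the distance claim is vacuous and the rate (with full rank of $\bar G$) follows from a direct count showing that a random $k\times mn$ matrix with $k\le mn-\Omega(n)$ has rank $k$ with probability $1-e^{-\Omega(n)}$.
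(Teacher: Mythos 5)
Your proposal is correct, and its core --- the distance bound --- is the same argument as the paper's: for each fixed nonzero message $u$ the codeword $\varphi^{-1}(u\bar G)$ is uniform on $\F_q^{m\times n}$, one applies the upper bound of Lemma~\ref{lem:ballsize} with $2\le q$, takes a union bound over the at most $q^k$ nonzero messages, and checks that the definition of $k$ makes the exponent collapse to $1-\varepsilon n$, exactly as in the paper. Where you genuinely diverge is the rate/dimension statement. The paper proves it by a separate counting argument: the probability that $G_1,\ldots,G_k$ are linearly independent in $\F_q^{m\times n}$ is $\prod_{i=0}^{k-1}\left(1-q^{i-mn}\right)\ge 1-e^{-\Omega(n)}$. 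You instead observe that the zero matrix has cover weight $0\le \delta n-1$, so any nonzero $u$ with $u\bar G=0$ already lies in the bad event; hence on its complement the encoding map is injective and the dimension is exactly $k$. This is more economical --- a single union bound yields both conclusions --- and it also forces you to confront two points the paper glosses over: the verification that $k/(mn)\ge 1-\delta-\varepsilon$ actually follows from the defining formula for $k$ (you correctly reduce this to $o(1)$ corrections for $\varepsilon>0$ and large $n\le m$), and the degenerate case $\delta=0$, where $\log_q(\delta n)$ is undefined and your injectivity-from-distance argument breaks down, so you fall back on a rank count --- which is precisely the paper's linear-independence computation. The two proofs are thus complementary: the paper's separate counting handles all $\delta$ uniformly, while your version shows the distance event alone already implies full dimension whenever $\delta n\ge 1$.
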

 \begin{proof}
 For the first statement on the rate, observe that $G_1, \ldots, G_k$ generate a code of dimension $k$, if 
 $$ \lambda_1 G_1 + \lambda_2 G_2 + \cdots + \lambda_k G_k =0,$$  with $\lambda_i \in \mathbb{F}_q$ implies that $\lambda_i=0$ for all $i \in \{1, \ldots, k\}.$ Thus, for $G_1$ we have $q^{nm}-1$ choices, that is all but the zero matrix, for $G_2$ we can choose any non-zero matrix which lies outside the span of $G_1$, thus we have $q^{nm}-q$ choices. Continuing in this manner, we get that the probability for $G_1, \ldots, G_k$ to generate a code of rate $\frac{k}{mn}$ is given by 
 $$\frac{ \prod\limits_{i=0}^{k-1} \left( q^{nm}-q^i\right)}{q^{mnk}} = \prod\limits_{i=0}^{k-1} \left(1-q^{i-mn}\right) \geq 1-e^{-\Omega(n)}.$$
  For the second statement, we first note that for any non-zero $x \in \mathbb{F}_q^k$, we have that 
 the codeword $$x_1 G_1 + \cdots + x_k G_k$$ is uniformly distributed in $\mathbb{F}_q^{m \times n}.$
 We now bound the counter probability, that is $$\Prob(\text{wt}_C(x_1 G_1 + \cdots + x_k G_k) < \delta n).$$ 
 This probability is given due to the uniform distribution of the codewords and Lemma~\ref{lem:ballsize} by
 \begin{align*}\frac{ V_C(n,m,q,\delta n-1)}{q^{mn}}  & \leq \frac{q^{m(\delta n-1)}2^{(m+n)H_2((\delta n-1)/(m+n))}\delta n}{q^{mn} } \\ &  \leq  q^{m\delta n+ (m+n)H_2((\delta n-1)/(m+n)) +\log_q(\delta n)-m -mn}.
 \end{align*}
 A union bound over all non-zero $x \in \mathbb{F}_q^k$, to get all non-zero codewords, now gives
\begin{align*} & q^k q^{m\delta n+ (m+n)H_2((\delta n-1)/(m+n)) +\log_q(\delta n)-m -mn} \\
 & = q^{ 1-n\varepsilon  } \leq e^{-\Omega(n)}.
\end{align*}
Thus, we get the claim.  
 \end{proof}
 \begin{definition}
 Let $n \leq m$ be positive integers.
 Let $\mathcal{C} \subseteq \F_q^{m \times n}$ of dimension $k$ and minimum cover distance $d$. We say that $\mathcal{C}$ is a maximum cover distance code, if 
 $$ d = n- \frac{k}{m} +1. $$
 \end{definition}
Since the asymptotic Singleton bound states that for any code of rate $R= \frac{k}{nm}$ and minimum distance $\delta=d/n$, the following must hold true:
$R \leq 1-\delta,$
we get the following corollary following Theorem \ref{thm:whp_GV}.
\begin{corollary}
Let $n \leq m$. The density of maximum cover distance codes in $\mathbb{F}_q^{m \times n}$ is 1, for $n$ going to infinity. 
\end{corollary}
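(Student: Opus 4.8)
The plan is to read \emph{density $1$} as probability $1$ in the limit under the random ensemble of Theorem~\ref{thm:whp_GV}, where the code is generated by $G_1,\dots,G_k\in\mathbb{F}_q^{m\times n}$ chosen independently and uniformly, and to prove that such a code is a maximum cover distance code with probability tending to $1$ as $n\to\infty$. The first step is to rephrase the defining equation in relative terms: writing $R=\tfrac{k}{mn}$ and $\delta=\tfrac{d}{n}$, a code of dimension $k$ and minimum cover distance $d$ is a maximum cover distance code exactly when $d=n-\tfrac{k}{m}+1$, i.e.\ when $R+\delta=1+\tfrac1n$. I would then bound the relative minimum cover distance of the random code from above by the Singleton bound and from below by Theorem~\ref{thm:whp_GV}, and show that the two estimates pin it onto the Singleton line as $n\to\infty$.

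For the upper bound I invoke the cover analogue of the Singleton bound: since $n\le m$, every $[m\times n,k,d]$ code satisfies $k\le m(n-d+1)$, hence $\delta\le 1-R+\tfrac1n$, with equality precisely for maximum cover distance codes. This is a deterministic ceiling, so the corollary reduces to showing that almost every code meets it in the limit.

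For the lower bound I fix a target relative distance $\delta$ and apply Theorem~\ref{thm:whp_GV} with the dimension $k$ prescribed there. It produces, with probability $1-e^{-\Omega(n)}$, a code of rate $R\ge 1-\delta-\varepsilon$ whose relative minimum cover distance is at least $\delta$. Since $R\ge 1-\delta-\varepsilon$ gives $\delta\ge 1-R-\varepsilon$, the relative minimum cover distance is at least $1-R-\varepsilon$, while the Singleton ceiling caps it at $1-R+\tfrac1n$; hence it lies in the window $[\,1-R-\varepsilon,\,1-R+\tfrac1n\,]$ and falls short of the maximum cover distance value $1-R+\tfrac1n$ by at most $\varepsilon+\tfrac1n$. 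Letting $n\to\infty$ and then $\varepsilon\to0$, this deficit tends to $0$, so the relative minimum cover distance converges to the Singleton-optimal value $1-R$; as the exceptional event has probability $e^{-\Omega(n)}$, the density of codes whose relative minimum cover distance attains the Singleton line equals $1$, which is the assertion of the corollary.

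The step I expect to be the genuine obstacle is closing the gap between this relative statement and the \emph{exact} equality $d=n-\tfrac{k}{m}+1$. A direct first-moment attempt---union bounding over the $q^k-1$ nonzero codewords, each uniform on $\mathbb{F}_q^{m\times n}$---yields failure probability at most $(q^k-1)\,V_C(n,m,q,n-\tfrac{k}{m})\,q^{-mn}$; here the dominant exponents cancel exactly, because $m\bigl(n-\tfrac{k}{m}\bigr)=mn-k$, and what survives is precisely the subexponential ball-size correction $(d)\,2^{(m+n)H_2((d-1)/(m+n))}$ of Lemma~\ref{lem:ballsize}. This correction is exactly the quantity separating the Gilbert--Varshamov guarantee from the Singleton bound. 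Under $n\le m$ it contributes only $O(m)$ to the base-$q$ exponent, hence $o(mn)$: it is invisible in the rate and in the relative minimum distance, so the sandwich above goes through, but it leaves an $O(1)$ absolute slack that the union bound cannot remove at a fixed dimension. My resolution is the one carried out above---absorb the correction into the $\varepsilon$-slack of Theorem~\ref{thm:whp_GV} and track the relative distance---so that ``maximum cover distance'' is certified with density $1$ in the asymptotic relative sense in which the defining equality becomes $R+\delta=1$; I would emphasize that the hypothesis $n\le m$ is essential here, as it is what keeps the ball-size correction subexponential and the Singleton bound in the form $d\le n-\tfrac{k}{m}+1$.
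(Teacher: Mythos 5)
Your proposal follows essentially the same route as the paper: there the corollary is stated as an immediate consequence of Theorem~\ref{thm:whp_GV} sandwiched against the asymptotic Singleton bound $R \le 1-\delta$, which is exactly your upper/lower bound argument with the exceptional event of probability $e^{-\Omega(n)}$. Your closing discussion of the $O(1)$ slack in the exact equality $d = n-\tfrac{k}{m}+1$ is simply a more explicit account of the asymptotic, relative-distance sense in which the paper implicitly reads ``maximum cover distance,'' so it is the same argument rather than a different one.
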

In addition, we also get that the density is 1 if we let the field size grow. 
\begin{lemma}
 For fixed $k\leq n \leq m$, the density of maximum cover distance codes is 1, for $q$ going to infinity. 
\end{lemma}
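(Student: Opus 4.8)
The plan is to show that the fraction of $k$-dimensional codes that fail to be maximum cover distance codes tends to $0$ as $q\to\infty$. First I would invoke the cover analogue of the Singleton bound: a code of dimension $k$ has minimum cover distance at most $n-k/m+1=:d$, so such a code is a maximum cover distance code precisely when it meets this bound with equality, i.e.\ precisely when it contains no nonzero codeword of cover weight at most $d-1=n-k/m$. Hence it suffices to bound the number of ``bad'' codes, meaning $k$-dimensional subspaces of $\F_q^{m\times n}$ that contain some nonzero matrix of cover weight at most $d-1$, and to divide by the total number $\binom{mn}{k}_q$ of $k$-dimensional subspaces (a Gaussian binomial coefficient).

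The counting proceeds by a union bound, but over \emph{lines} rather than over individual matrices. The decisive observation is that the cover weight is invariant under multiplication by a nonzero scalar, since scaling leaves the support of a matrix unchanged; therefore the $V_C(n,m,q,d-1)-1$ nonzero matrices of cover weight at most $d-1$ are partitioned into $(V_C(n,m,q,d-1)-1)/(q-1)$ one-dimensional subspaces. Each such line is contained in exactly $\binom{mn-1}{k-1}_q$ subspaces of dimension $k$, so the number of bad codes is at most $\tfrac{V_C(n,m,q,d-1)-1}{q-1}\binom{mn-1}{k-1}_q$. Using the standard identity $\binom{mn-1}{k-1}_q\big/\binom{mn}{k}_q=(q^k-1)/(q^{mn}-1)$, the density of bad codes is then bounded by
$$\frac{V_C(n,m,q,d-1)-1}{q-1}\cdot\frac{q^{k}-1}{q^{mn}-1}.$$

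To finish, I would insert the upper bound on the ball size from Lemma~\ref{lem:ballsize}. Since $m(d-1)=mn-k$, that lemma gives $V_C(n,m,q,d-1)\le d\,2^{(m+n)H_2((d-1)/(m+n))}q^{mn-k}$, where the prefactor is a constant $C_0$ independent of $q$ because $k,n,m$ are fixed. Substituting yields a density of bad codes of at most $\tfrac{C_0\,q^{mn-k}}{q-1}\cdot\tfrac{q^{k}-1}{q^{mn}-1}=O(1/q)$, which tends to $0$; equivalently the density of maximum cover distance codes tends to $1$. I expect the main obstacle to be precisely the ingredient that makes the estimate decay: a naive union bound over the individual low-weight matrices gives only a bound of order $O(1)$, so one must exploit the scale invariance of the cover weight to count one-dimensional subspaces and thereby gain the extra factor $1/(q-1)$ that drives the density to zero. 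A minor point to verify is that $0<d-1\le n$, so that Lemma~\ref{lem:ballsize} applies, together with the assumption $m\mid k$ (implicit in the definition of a maximum cover distance code) that makes $d$ a genuine integer distance; the degenerate case $d-1=0$ is trivial, as then every code is a maximum cover distance code.
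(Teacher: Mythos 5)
Your proof is correct, and it takes a genuinely different route from the paper's. The paper stays in the random-generator model of Theorem~\ref{thm:whp_GV}: for each of the $q^k$ coefficient vectors the corresponding codeword is uniform in $\F_q^{m\times n}$, and the decay in $q$ is created by an $\varepsilon$-slack in the distance, $d = n - \frac{k+\varepsilon}{m}+1$, so that the single-codeword probability $c\,q^{-k-\varepsilon}$ survives the union bound over all codewords as $c\,q^{-\varepsilon}$. You instead put the uniform measure on $k$-dimensional subspaces and count bad codes combinatorially: since the cover weight depends only on the support, it is invariant under nonzero scaling, so the nonzero matrices of weight at most $n-k/m$ split into $\left(V_C(n,m,q,n-k/m)-1\right)/(q-1)$ projective lines, each contained in exactly $\gaussian{mn-1}{k-1}_q$ of the $k$-dimensional subspaces, and the Gaussian-binomial identity turns this into an $O(1/q)$ bound on the density of bad codes. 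The two approaches buy different things. Yours works directly at the critical radius $n-k/m$; as you observe, a matrix-level union bound there only gives $O(1)$, and the factor $1/(q-1)$ gained by projectivizing is exactly what forces decay. This also makes your argument slightly sharper than the paper's: because cover weights are integers, the paper's slack radius $d-1 = n-\frac{k+\varepsilon}{m}$ only excludes codewords of weight at most $n-k/m-1$, so, read literally, the $\varepsilon$-trick does not rule out codewords of weight exactly $n-k/m$ --- precisely the boundary case that your $1/(q-1)$ factor handles. What the paper's approach buys in exchange is that it needs no subspace counting and reuses verbatim the machinery already set up for the Gilbert--Varshamov-type Theorem~\ref{thm:whp_GV}. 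Your side remarks (applicability of Lemma~\ref{lem:ballsize}, the implicit assumption $m \mid k$, and the triviality of the case $d=1$) are all correct, and are glossed over in the paper's own proof as well.
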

\begin{proof}
    Let $\mathcal{C}$ be a random $[m \times n, k]$ cover-metric code and $\varepsilon>0$.
    Then, the probability that $\mathcal{C}$ has minimum distance smaller than $d=n-\frac{k+\varepsilon}{m}+1$ 
    can be bounded as
    $$\Prob(\exists C\in\mathcal{C}\setminus\{0\} \ \text{with} \  \text{wt}_C(C)\leq d-1) \leq \frac{d \cdot q^{m(d-1)} 2^{(m+n)H_2(d/(m+n))}}{q^{m n}},$$
    using the same arguments as in the proof of Theorem \ref{thm:whp_GV}.
    We define $$c=d \cdot2^{(m+n)H_2(d/(m+n))},$$ which is a constant in $q$ and simplify $m(d-1)-mn = -k-\varepsilon$, thus getting as an upper bound on the probability  $c q^{-k+\varepsilon}.$
    Then, the claim follows from the union bound, as
    $$\Prob(\forall \ C \in \mathcal{C}\setminus\{0\} \ \text{we have} \ \text{wt}_C(C)\geq d) = 1-c\cdot q^{-\varepsilon},$$ which tends to 1 for $q$ going to infinity.  
\end{proof}

 \section{Cover Decoding Problem}\label{sec:NP}
 Although NP-hardness is in theory only defined for decisional problems, their computational counterparts are usually also called NP-hard. In fact,  if one could solve the computational problem one would also be able to answer the decisional one. However, in the following we will only consider the decisional version of the problems. 
 The (decisional) cover-metric decoding problem asks if there exists an error matrix, for a given erroneous codeword of a random code.
 \begin{problem}[Cover Decoding Problem]\label{CDP}
 Given $G_1, \ldots, G_k \in \mathbb{F}_q^{m \times n}$, a positive integer $t\leq \min\{m,n\}$ and $Y \in \mathbb{F}_q^{m \times n}$, does there exist a matrix $E \in \mathbb{F}_q^{m \times n}$, such that  there exist $x_1, \ldots, x_k \in \mathbb{F}_q$ with
 $$x_1G_1 + \cdots + x_kG_k +E = Y,$$ and $\text{wt}_C(E) \leq t$?
 \end{problem}
 This problem is equivalent to the following  problem, in the same manner as this equivalence holds for other metrics and ambient spaces. 
  \begin{problem}[Cover Syndrome Decoding Problem]\label{CSDP}
 Given $H_1, \ldots, H_{mn-k} \in \mathbb{F}_q^{m \times n}$, $s=(s_1, \ldots, s_{mn-k})\in \mathbb{F}_q^{mn-k}$, a positive integer $t\leq \min\{m,n\}$, does there exist an $E \in \mathbb{F}_q^{m \times n}$, such that $$\langle H_i, E \rangle = s_i$$ for all $i \in \{1, \ldots, mn-k\}$ and $\text{wt}_C(E) \leq t$?
 \end{problem}
  Recall that an algorithm that solves the lowest-weight codeword problem can also solve the decoding problem, by adding $Y$ as a generator to the random code. The error matrix $E$  is now the codeword of the new code of smallest cover weight. Thus, there is a reduction from the  decoding problem to the lowest-weight codeword problem. 
  \begin{problem}[Lowest Cover Weight Codeword Problem]\label{LCWCP}
 Given a positive integer $t\leq \min\{m,n\}$ and $G_1, \ldots, G_k \in \mathbb{F}_q^{m \times n}$, does there exist $(x_1, \ldots, x_k) \in \mathbb{F}_q^k$ such that
 $$\text{wt}_C(x_1G_1 + \cdots + x_kG_k) \leq t?$$
 \end{problem}
 We now show that Problem \ref{LCWCP} is NP-hard, by giving a polynomial time reduction from the Lowest Hamming Weight Codeword Problem.
   \begin{problem}[Lowest Hamming Weight Codeword Problem]
  Given $G \in \mathbb{F}_q^{k \times n}$, a positive integer $t\leq n$, does there exist $x \in \mathbb{F}_q^k\setminus\{0\}$ such that
 $$\text{wt}_H(xG) \leq t?$$
 \end{problem}
 \begin{theorem}\label{theo:coverIsNP}
  The Lowest Cover Weight Codeword Problem is NP-complete.
 \end{theorem}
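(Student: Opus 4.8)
The plan is to prove both directions required for NP-completeness: membership in NP, and NP-hardness via a weight-preserving polynomial-time reduction from the Lowest Hamming Weight Codeword Problem (which is classically NP-complete). For membership in NP, I would take a candidate vector $x\in\mathbb{F}_q^k$ as the certificate; since $x$ has polynomial bit-length and the codeword $x_1G_1+\cdots+x_kG_k$ is computable in polynomial time, the only point needing care is verifying $\text{wt}_C(x_1G_1+\cdots+x_kG_k)\leq t$ efficiently. To this end I would observe that a cover of a matrix $A$, in the sense of the definition, is precisely a vertex cover of the bipartite graph whose left and right vertex sets are the row and column indices of $A$, with an edge joining $i$ and $j$ exactly when $a_{i,j}\neq 0$. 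Hence $\text{wt}_C(A)$ equals the minimum vertex-cover number of this graph, which by König's theorem coincides with its maximum matching number and is therefore computable in polynomial time. This places the problem in NP.

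For NP-hardness, given an instance $(G,t)$ of the Lowest Hamming Weight Codeword Problem with $G\in\mathbb{F}_q^{k\times n}$ having rows $g_1,\ldots,g_k\in\mathbb{F}_q^n$, I would set $m=n$ and define $G_i=\operatorname{diag}(g_i)\in\mathbb{F}_q^{n\times n}$, the diagonal matrix carrying $g_i$ on its main diagonal, retaining the same target weight $t\leq n=\min\{m,n\}$ (so the domain constraint of the cover problem is respected). This construction is clearly polynomial in the input size. By linearity of the diagonal embedding, every codeword of the resulting matrix code has the form $x_1G_1+\cdots+x_kG_k=\operatorname{diag}(xG)$, where $x=(x_1,\ldots,x_k)$, and $x\neq 0$ corresponds to $x\neq 0$ on both sides.

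The crux of the reduction is the identity $\text{wt}_C(\operatorname{diag}(v))=\text{wt}_H(v)$ for every $v\in\mathbb{F}_q^n$. The nonzero entries of $\operatorname{diag}(v)$ occupy the positions $(i,i)$ with $v_i\neq 0$, and these lie on pairwise distinct rows and pairwise distinct columns. Consequently a single line (row $i$, or column $i$) can cover at most one such entry, which forces $\text{wt}_C(\operatorname{diag}(v))\geq \text{wt}_H(v)$; on the other hand, selecting the row through each nonzero diagonal entry produces a cover of size exactly $\text{wt}_H(v)$, giving the reverse inequality and hence equality. It follows that there exists a nonzero $x$ with $\text{wt}_H(xG)\leq t$ if and only if there exists a nonzero $x$ with $\text{wt}_C(x_1G_1+\cdots+x_kG_k)\leq t$, which is exactly the equivalence of the two instances.

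I expect the main obstacle to be conceptual rather than computational, lying in the two structural observations just sketched: first, recognizing that the cover metric is governed by bipartite minimum vertex cover and is therefore tractable via König's theorem (needed for NP membership), and second, making the lower-bound half of the diagonal identity fully airtight, namely that distinct diagonal cells can never share a covering line so that no cover can be smaller than the Hamming weight. Once these are in place, the polynomiality of the construction and the transfer of yes/no answers are routine.
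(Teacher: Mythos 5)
Your proof is correct, but the hardness reduction is genuinely different from the paper's. The paper sets $m=t+1$ and stacks each row $g_i$ of $G$ exactly $t+1$ times to form $G_i\in\mathbb{F}_q^{(t+1)\times n}$; every codeword then has $t+1$ identical rows, so a cover of size at most $t$ cannot afford to use rows at all (a nonzero entry not covered by its column would force all $t+1$ rows into the cover), which ties cover weight $\leq t$ to Hamming weight $\leq t$ of the repeated row. Your diagonal embedding $G_i=\operatorname{diag}(g_i)$ with $m=n$ instead yields the exact isometry $\text{wt}_C(\operatorname{diag}(v))=\text{wt}_H(v)$: it is weight-preserving at every weight, independent of the threshold $t$, and your lower-bound argument (no single line can cover two distinct diagonal cells) is airtight. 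The trade-offs are minor: the paper's instance has $k(t+1)n$ field elements versus your $kn^2$, which can be smaller for small $t$, but both are polynomial; and both constructions extend directly to the decoding problem (the paper reuses its stacking in Theorem \ref{theo:coverdIsNP}, and yours would work as well because $\operatorname{diag}(r)-\operatorname{diag}(xG)$ is again diagonal, forcing the error matrix to be diagonal). Where your write-up genuinely improves on the paper is NP membership: the paper's one-line claim that candidate solutions are verifiable in polynomial time tacitly requires that deciding $\text{wt}_C(A)\leq t$ is tractable, and your observation that this is bipartite minimum vertex cover, hence polynomial by K\"onig's theorem via maximum matching, supplies exactly the missing justification (alternatively one could place the cover itself into the certificate).
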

\begin{proof}
Let $(q,n,G,t)$ be a random instance of the Lowest Hamming Weight Codeword Problem, which is known to be NP-complete \cite{berlekamp1978inherent, barg}.
 Let $g_i$ denote the $i$-th row of $G$.
 We define $G_i\in \mathbb{F}_q^{(t+1)\times n}$ as the matrix which results from stacking $g_i$ $t+1$ times, i.e., $G_i = (g_i^\top,\ldots,g_i^\top)^\top$.
 Then, $(q,G_1, \ldots, G_k,t)$ is an instance of the Lowest Cover Weight Codeword Problem with $m=t+1$.
 Due to the method for constructing the generators, this instance of the Lowest Cover Weight Codeword Problem has the property that all $t+1$ rows of a codeword $A=\sum_{i=1}^{k} u_i G_i$ are identical.
 We refer to any such row of $A$ as $a$.
 Solving this particular  problem also solves the  problem in the Hamming metric.
 
 We begin by considering the case in which the answer to the cover-metric problem is "yes", i.e, there is a codeword $A=\sum_{i=1}^{k} u_i G_i$ with  $(u_1,\ldots,u_k)\neq 0$ and cover weight $\leq t$.
 Due to the special choice of the $G_i$, any cover of weight $\leq t$ has to be formed by columns only.
 We conclude that $a$ has at most $t$ non-zero entries.
 As $a=(u_1,\ldots,u_k) G$, $a$ is also a solution to the Hamming-metric problem.
 
 On the other hand, if the answer to the cover-weight problem is "no", every matrix $A=\sum_{i=1}^{k} u_i G_i$  with  $(u_1,\ldots,u_k)\neq 0$ has cover weight $>t$.
 As the rows of $A$ are identical, also each row $a$ has to have at least $t+1$ non-zero entries.
 Therefore, each $a=(u_1,\ldots,u_k) G$ with $(u_1,\ldots,u_k)\neq 0$ has to have at least $t+1$ non-zero entries and the answer to the Hamming-weight problem is also "no".
 
 This proves the NP-hardness of the Lowest Cover Weight Codeword Problem.
 Finally, we note that the problem lives in NP, since any candidate solution can be verified in polynomial time.  
\end{proof}
Similarly, we can also show that Problem \ref{CDP} is NP-hard, by giving a polynomial-time reduction from the Hamming Decoding Problem.
 \begin{problem}[Hamming Decoding Problem]
  Given $G \in \mathbb{F}_q^{k \times n}$, $r \in \mathbb{F}_q^n$, a positive integer $t\leq n$, does there exist $e \in \mathbb{F}_q^n$ such that $r-e \in \langle G \rangle$ and 
 $$\text{wt}_H(e) \leq t.$$
 \end{problem}
\begin{theorem}\label{theo:coverdIsNP}
  The  Cover  Decoding Problem is NP-complete.
 \end{theorem}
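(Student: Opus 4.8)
The plan is to mimic the reduction used in the proof of Theorem~\ref{theo:coverIsNP}, but starting this time from the Hamming Decoding Problem. Given an instance $(q,n,G,r,t)$ of the Hamming Decoding Problem, I would let $g_i$ denote the $i$-th row of $G$ and define $G_i\in\mathbb{F}_q^{(t+1)\times n}$ by stacking $g_i$ exactly $t+1$ times, i.e. $G_i=(g_i^\top,\ldots,g_i^\top)^\top$. In addition, I would build the target matrix $Y\in\mathbb{F}_q^{(t+1)\times n}$ by stacking the received word $r$ exactly $t+1$ times. Setting $m=t+1$, this produces an instance $(q,G_1,\ldots,G_k,t,Y)$ of the Cover Decoding Problem, which is admissible since $t\leq n$ implies $t\leq\min\{t+1,n\}$, and the whole construction is clearly polynomial in the input size.

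The key structural observation is that the construction forces every admissible error matrix to have identical rows. Indeed, since $Y$ and each $G_i$ consist of $t+1$ identical rows, any $E=Y-\sum_{i=1}^k x_iG_i$ must also consist of $t+1$ identical rows, each equal to $e:=r-(x_1,\ldots,x_k)G$. This is the cover-metric analogue of the fact, used in Theorem~\ref{theo:coverIsNP}, that the codewords of the constructed code have identical rows; the difference here is that the property holds automatically for the residual $E$ rather than being built into the codewords themselves.

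It then remains to translate the cover weight of $E$ into the Hamming weight of $e$. For a matrix with $t+1$ identical rows equal to $e$, a cover cannot be shrunk by selecting individual rows: if any column of $\Supp(e)$ is left out of the cover, then all $t+1$ of its entries must be covered by rows, forcing all $t+1$ rows into the cover. Hence any minimal cover either takes all $t+1$ rows or takes exactly the columns of $\Supp(e)$, so that $\text{wt}_C(E)=\min\{t+1,\text{wt}_H(e)\}$. This yields the equivalence $\text{wt}_C(E)\leq t\iff\text{wt}_H(e)\leq t$. Combining it with $r-e=(x_1,\ldots,x_k)G\in\langle G\rangle$ gives both directions: a ``yes'' answer to the Cover Decoding instance produces a valid error $e$ for the Hamming instance, and conversely stacking a valid Hamming error $e$ into $E$, together with the corresponding $x_i$, solves the Cover Decoding instance.

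I expect the main point requiring care to be the threshold bookkeeping in the identity $\text{wt}_C(E)=\min\{t+1,\text{wt}_H(e)\}$: the number of stacked copies must be exactly $t+1$ so that a cover weight of at most $t$ is achievable precisely when $\text{wt}_H(e)\leq t$, and not off by one. Finally, membership in NP is immediate, since for a candidate pair $(E,x_1,\ldots,x_k)$ one can verify the linear identity and compute $\text{wt}_C(E)$ in polynomial time, as the minimal cover is a minimum vertex cover of a bipartite graph, obtainable from a maximum matching via K\H{o}nig's theorem.
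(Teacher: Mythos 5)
Your proposal is correct and follows essentially the same reduction as the paper: stack each row $g_i$ of $G$ and the received word $r$ exactly $t+1$ times, observe that every admissible error matrix then has $t+1$ identical rows, and translate its cover weight into the Hamming weight of that common row. Your explicit identity $\wt_C(E)=\min\{t+1,\wt_H(e)\}$ and the K\H{o}nig's-theorem justification that the cover weight is computable in polynomial time are slightly more detailed than the paper's phrasing, but the argument is the same.
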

\begin{proof}
Let $(q,G,r,t)$ be a random instance of the  Hamming  Decoding Problem, which is known to be NP-complete \cite{berlekamp1978inherent, barg}.
 Let $g_i$ denote the $i$-th row of $G$.
 We define $G_i\in \mathbb{F}_q^{(t+1)\times n}$ as the matrix which results from stacking $g_i$ $t+1$ times and $R \in \mathbb{F}_q^{(t+1) \times n}$ the matrix  which results from stacking $r$ $t+1$ times.
 Then, $(q,G_1, \ldots, G_k,R, t)$ is an instance of the  Cover  Decoding Problem with $m=t+1$.
Similar to before, we have that all $t+1$ rows of a codeword $C=\sum_{i=1}^{k} u_i G_i$ are identical and solving this particular  problem also solves the  problem in the Hamming metric.
 Let us assume that the answer to the cover-metric problem is "yes", i.e, there is a matrix $E \in \mathbb{F}_q^{(t+1) \times n}$ such that $R-E \in \langle G_1, \ldots, G_k)$ and $\text{wt}_C(E) \leq t$. Since all $t+1$ rows of $R$ are identical and all rows of a codeword in $\langle G_1, \ldots, G_k \rangle$ are identical, we must get that all rows of $E$ are identical as well. Let us denote such a row by $e \in \mathbb{F}_q^n.$
 Since $E$ consists of $t+1$ identical rows, any cover of weight $\leq t$ can be formed by columns only, thus $\text{wt}_H(e) \leq t.$ Since $e$ is such that $r-e \in \langle G \rangle$ by construction, we have that the answer to the Hamming-metric problem is also 'yes'. 

 On the other hand, if the answer to the cover-weight problem is "no", every matrix $E \in \mathbb{F}_q^{(t+1) \times n}$ which is such that $R-E  \in \langle G_1, \ldots, G_k\rangle$  has cover weight $>t$.
 As the rows of $E$ are identical, also each row $e$, which are all the vectors with $r-e \in \langle G \rangle$, has $\text{wt}_H(e)>t$.
 Thus, the answer to the Hamming-weight problem is also "no".
 Since this problem clearly lives in NP, we get the claim.  
\end{proof}
\section{Generic Decoding in the Cover Metric}\label{sec:generic}
In this section, we introduce and compare different approaches to generic decoding in the cover metric.

In order to avoid more technical statements, we fix that $m\cdot R \leq n\leq m$.
A generalization to $n < R\cdot m$ is straightforward but does not provide any further insights.

\subsection{Error Model}

In order to evaluate the complexity of the decoding approaches, it is essential to specify the underlying error model.
The most general one is given as follows.

\begin{definition}[General error model]\label{def:general_err}
An error $E$ of cover weight $t$ is created by choosing $E$ at random from 
$\left\{A\in\mathbb{F}_q^{m \times n} \mid \wt_C(A)=t\right\}$.
 \end{definition}
  Another error model \cite{roth} allows a simplified generation of error matrices, as well as a simplified analysis of the complexity of generic decoding: 
\begin{definition}[Simple error model]\label{def:simple_err}
 An error $E$ of cover weight $t$ is created as follows:
 first pick $t$ of the $n+m$ rows and columns  at random.
 Then, create $E$ by setting the elements in the picked rows and columns to random elements of $\mathbb{F}_q$.
 All remaining elements are zero.
 The cover weight of $E$ is checked. If $\wt_C(E)<t$, the process is repeated.
 \end{definition}

The error models are similar, but not identical.
In the general error model, every matrix in the sphere of radius $t$ is chosen with equal probability
$$\Prob(E\mid\text{general model}) = \frac{1}{F_C(n,m,q,t)}\quad \forall E\in\{A \in \mathbb{F}_q^{m \times n} \ \mid \ \wt_C(A) =t\}.$$
This is not the case for the simple error model, for which some matrices are more likely than others.
Define $J_{t,n}\in \mathbb{F}_q^{t \times n}$ as the all-one matrix.
Consider the two error matrices
$$E_1 = 
\begin{pmatrix}
\text{Id}_t & 0 \\
 0 & 0
\end{pmatrix}
\text{ and }
E_2 = 
\begin{pmatrix}
J_{t,n} \\
0 
\end{pmatrix},$$
which both have cover weight $t$.
Then, it holds that
$$\Prob(E_1\mid\text{simple model}) = 2^t  \Prob(E_2\mid\text{simple model}).$$
The matrix $E_2$ is generated by choosing exactly the first $t$ rows and the all-one vectors for each line, whereas $E_1$ is generated by $2^t$ combinations of lines and choosing one particular vector for each line.

Nevertheless, the simple model can be used to accurately approximate the general error model in practice, as the following theorem shows.

\begin{theorem}[Unique Minimal Cover]\label{theo:unique_cover}
 Let $E\in\F_q^{m\times n}$ be a matrix which is created by choosing $t\leq\min\{m,n\}$ out of $m+n$ lines at random and filling the chosen rows and columns with random elements from $\mathbb{F}_q$.
 Then, for sufficiently large $n$, $E$ has a unique minimal cover of cardinality $t$ with high probability.
\end{theorem}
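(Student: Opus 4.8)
The plan is to pass to bipartite-graph language and then run a single union bound. First I would associate to $E$ its bipartite support graph $\Gamma_E$ on the vertex set $\{1,\dots,m\}\sqcup\{1,\dots,n\}$ (rows on one side, columns on the other), with an edge between $i$ and $j$ exactly when $E_{i,j}\neq 0$. By the definition of $\Cov$, a cover of $E$ is precisely a vertex cover of $\Gamma_E$, so $\wt_C(E)$ is the minimum vertex-cover number of $\Gamma_E$, which by König's theorem equals its maximum matching number. Write $R\subseteq\{1,\dots,m\}$ and $C\subseteq\{1,\dots,n\}$ for the $r$ rows and $c$ columns that were selected, so $r+c=t$. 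The set $R\cup C$ is obviously a cover of size $t$, hence $\wt_C(E)\le t$ always, and the whole statement reduces to the single claim that, with high probability, \emph{the only vertex cover of $\Gamma_E$ of size at most $t$ is $R\cup C$}.

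Next I would establish $\wt_C(E)=t$ by exhibiting a matching of size $t$. All entries lying in a selected line are independent uniform elements of $\F_q$, so a fixed $i\in R$ has a nonzero entry among its $n-c$ positions in columns outside $C$ with probability $1-q^{-(n-c)}$, and each $j\in C$ has a nonzero entry among its $m-r$ positions in rows outside $R$. Since these tails have length $\ge n-t$ and $\ge m-t$, a short Hall/SDR argument shows that with probability $1-\mathrm{poly}(m,n)\,q^{-\Omega(n)}$ the witnesses can be chosen pairwise disjoint; the $r$ row-witnesses (in $\overline{C}$) together with the $c$ column-witnesses (in $\overline{R}$) then form a matching of size $r+c=t$, forcing $\wt_C(E)\ge t$.

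For uniqueness I would union-bound over alternative covers. Since $\wt_C(E)=t$, any cover of size $t$ is automatically irredundant (minimal), and I parametrise such a cover $S=S_R\cup S_C$ by the dropped selected rows $A=R\setminus S_R$, dropped selected columns $B=C\setminus S_C$, added rows $P=S_R\setminus R$ and added columns $Q=S_C\setminus C$, with $|A|=a,|B|=b,|P|=p,|Q|=s$ and $p+s=a+b$. Irredundancy is the crucial structural input: every added column $j\in Q$ carries a private edge $(i,j)$, and since $j\notin C$ this entry can be nonzero only if $i\in R$, whence $i\in A$ and $a\ge 1$; symmetrically $p\ge 1$ forces $b\ge 1$. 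For $S$ to be a cover, every selected entry lying outside all lines of $S$ must vanish, and the number of these independent uniform entries is $N=a(n-c+b-s)+b(m-r-p)$, so $S$ is a cover with probability $q^{-N}$. The goal is then to show $\sum\binom{r}{a}\binom{c}{b}\binom{m-r}{p}\binom{n-c}{s}q^{-N}\to 0$ as $n\to\infty$, the sum ranging over $(a,b,p,s)\neq 0$.

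The main obstacle is exactly the convergence of this sum, and in particular the factor $\binom{m-r}{p}\approx m^p$ coming from the freedom to add rows out of the possibly very large complement $\overline{R}$ (recall $m\ge n$). This is where irredundancy does the decisive work: it forbids adding lines "for free", tying each added row to the block of $b(m-r-p)$ forced-zero entries whose size also scales with $m$, so that $m^p$ is dominated by $q^{-b(m-r-p)}$, and dually $n^s$ is beaten by the $q^{-a(\cdot)}$ contribution. I would finish by splitting into the cases $(a\ge1,b\ge1)$, $(a\ge1,b=0)$ and $(a=0,b\ge1)$; in the last two, irredundancy forces $p=0$ respectively $s=0$, removing the dangerous enumeration factor, and in each case one checks that the polynomial factors are overwhelmed by $q^{-\Omega(n)}$ (or $q^{-\Omega(m)}$) for $n$ large, with $t$ held fixed relative to $n$.
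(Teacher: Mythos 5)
Your proposal is correct in substance but takes a genuinely different route from the paper. The paper normalizes $E$ (up to row/column permutations) into the block form $\begin{pmatrix} A & B \\ C & 0\end{pmatrix}$, isolates a deterministic sufficient condition for uniqueness of the cover --- the union of the supports of any $\ell$ rows of $B$ must have cardinality at least $\ell+1$, and dually for the columns of $C$ --- and then lower-bounds, by choosing the rows of $B$ sequentially, the number of fillings satisfying this condition. This yields an explicit product lower bound on the success probability, and that explicit count is reused right after the theorem to lower-bound the sphere size $F_C(n,m,q,t)$. You instead argue graph-theoretically: a cover is a vertex cover of the bipartite support graph, K\"onig/Hall gives $\wt_C(E)=t$ via a matching of size $t$, and uniqueness is handled by a union bound over alternative covers of size exactly $t$, parametrized by dropped/added lines. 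Your irredundancy (private-edge) constraints $s\geq 1\Rightarrow a\geq 1$ and $p\geq 1\Rightarrow b\geq 1$ are correct, as is the cover probability $q^{-N}$ with $N=a(n-c+b-s)+b(m-r-p)$; together with the matching step this does prove the theorem. What your route buys is a cleaner conceptual picture and a direct quantification of how unlikely each competing cover is; what it gives up is the paper's explicit enumeration of matrices with a unique minimal cover, which is exactly what feeds the subsequent corollary on $F_C$.

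One caveat on the final estimate, which you assert rather than carry out: the pairing as you describe it --- ``$m^p$ is dominated by $q^{-b(m-r-p)}$'' --- is not literally true when $p$ is much larger than $b$. For instance, with $a=b=\ell$, $p=2\ell$, $s=0$, $m=n$ and $t=n(1-R)/2$ with $R<1/3$, the product $m^{2\ell}q^{-\ell(m-r-2\ell)}$ diverges; that term is only saved by the \emph{other} factor $q^{-a(n-c-s+b)}$. The clean repair is to spread the enumeration factor over both exponentials: since $p+s=a+b$ and $n\leq m$, one has $\binom{m-r}{p}\binom{n-c}{s}\leq m^{a+b}$, while $N\geq a(n-2t)+b(m-2t)$, so every term is at most $\left(tm\,q^{-(n-2t)}\right)^{a}\left(tm\,q^{-(m-2t)}\right)^{b}$. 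In the paper's asymptotic regime ($m=n$, $t=n(1-R)/2$, so $n-2t=nR=\Omega(n)$) both bracketed quantities tend to zero, there are only polynomially many tuples $(a,b,p,s)$, and your union bound converges; handling $m$ growing superpolynomially in $n$ requires exploiting the $b\geq 1$ factor separately, exactly along your case split. With this bookkeeping made explicit, your proof closes.
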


 \begin{proof}
Let $t_x$ denote the number of chosen rows, $t_y$ the number of chosen columns and $t=t_x+t_y$ the total number of chosen lines.
As the cover weight is invariant under the permutation of rows and columns, we can assume without loss of generality that $E$ is of the form
$$E= 
\begin{pmatrix}
A & B \\
C & 0
\end{pmatrix},$$
where $A\in \F_q^{t_x\times t_y}$, $B\in \F_q^{t_x\times (n-t_y)}$ and $C\in \F_q^{(m-t_x)\times t_y}$.
In total, there are 
$$q^{t_x\cdot t_y}q^{t_x\cdot (n-t_y)}q^{(m-t_x)\cdot t_y}$$
possibilities for choosing such a matrix $E$.
In the following, we develop a lower bound for the number of choices which lead to a matrix $E$, which has exactly one minimal cover of size $t$. 
In other words, we bound the number of choices which lead to a matrix of cover weight $t$ which cannot be created using any other set of lines.
For this, the only possibility to cover $B$ has to be the cover consisting of all $t_x$ rows.
This is the case if and only if the union of the supports of any $\ell$ rows has cardinality at least $\ell+1$ for $\ell \in\{1,\ldots,t_x\}$.
If there is a union of the supports of $\ell$ rows, which has cardinality at most $\ell$, one can replace those rows in the cover by the columns which are indexed by the union.
In particular, none of the rows of $B$ is allowed to have Hamming weight zero or one.
In order to bound the number of possible matrices $B$, one can assume that the rows are chosen sequentially such that the required property is satisfied.
We consider the number of possibilities for choosing the $t_x$-th row.
This number is minimized if the previous $t_x-1$ rows are chosen in the most restricting way, which is the case if the union of the support of any $\ell$ rows has size exactly $\ell+1$.
Up to a permutation of columns, the only constellation of the previous $t_x$ rows, which satisfies this is given by 
$$\begin{pmatrix}
J_{t_x-1,1} & \text{Id}_{t_x-1} & 0 
\end{pmatrix}\in\F_q^{(t_x-1)\times(n-t_y)}$$
Then, out of the total number of  $q^{n-t_y}$ possibilities for  choosing the last row, we have to discard those, which have zeros in the last $n-t_x-t_y$ last positions.
The discarded possibilities include the weight zero vector and the weight one vectors, for which the non-zero entry is located in the first $t_x$ positions.
Further, we discard the weight one vectors with nonzero entry in the last $n-t_y-t_x=n-t$ positions, of which there are $(q-1)(n-t)$.
Hence, there are always at least 
$$q^{n-t_y} - (q-1)(n-t) - q^{t_x}$$
possibilities to choose the $t_x$-th row.
For the previous rows there are less restrictions. 
Therefore, we can conclude that there are at least
$$\left(q^{n-t_y} - (q-1)(n-t) - q^{t_x}\right)^{t_x}$$
possibilities for choosing $B$.
Using an analogous argument, one can show that there are at least 
$$\left(q^{m-t_x} - (q-1)(m-t) - q^{t_y}\right)^{t_y}$$
possibilities for choosing $C$.
Finally, there are $q^{t_x\cdot t_y}$ possibilities for choosing $A$ and all of them are suitable for obtaining a matrix $E$ with a unique minimal cover of cardinality $t$.
In summary, the number of suitable possibilities is at least 
$$ q^{t_x\cdot t_y} \left(q^{n-t_y} - (q-1)(n-t) - q^{t_x}\right)^{t_x} 
\left(q^{m-t_x} - (q-1)(m-t) - q^{t_y}\right)^{t_y}.$$
Therefore, the probability of creating such a matrix $E$ with unique minimal cover of size $t$ is bound from below by
\begin{multline*}
\frac{q^{t_x\cdot t_y} \left(q^{n-t_y} - (q-1)(n-t) - q^{t_x}\right)^{t_x} \left(q^{m-t_x} - (q-1)(m-t) - q^{t_y}\right)^{t_y}}{q^{t_x\cdot t_y}q^{t_x\cdot (n-t_y)}q^{(m-t_x)\cdot t_y}} \\
= \left(1-\frac{(q-1)(n-t)}{q^{n-t_y}}-q^{t-n}\right)^{t_x}\left(1-\frac{(q-1)(m-t)}{q^{m-t_x}}-q^{t-m}\right)^{t_y},
\end{multline*}
which tends to $1$ for $n\leq m$, $t=n(1-R)/2$ and $n$ going to infinity. 
\end{proof}

From Theorem $\ref{theo:unique_cover}$ we get a lower bound  on the size of the spheres, which is of a similar form as the upper bound given in \cite{martinez2022multilayer}.
\begin{corollary}
Let $t \leq \min\{n, m\} $ be positive integers. Then,
\begin{align*} 
F_C(n,m,q,t)\geq \sum_{t_x+t_y=t}\binom{m}{t_x}\binom{n}{t_y}q^{t_x\cdot t_y} \left(q^{n-t_y} - (q-1)(n-t) - q^{t_x}\right)^{t_x} \\ 
\cdot \left(q^{m-t_x} - (q-1)(m-t) - q^{t_y}\right)^{t_y}.
\end{align*}
\end{corollary}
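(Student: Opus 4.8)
The plan is to exhibit an explicitly countable family of matrices sitting inside the sphere $\{A \in \F_q^{m\times n} \mid \wt_C(A) = t\}$ and to lower-bound $F_C(n,m,q,t)$ by the cardinality of that family. The natural choice is the set $U$ of all matrices whose minimal cover is \emph{unique} and has cardinality $t$. Every such matrix has cover weight exactly $t$, so $U \subseteq \{A \mid \wt_C(A) = t\}$ and hence $F_C(n,m,q,t) \geq |U|$. The task then reduces to counting $|U|$ by organizing its elements according to which lines constitute their unique cover.

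First I would partition $U$ according to this unique minimal cover. For a splitting $t = t_x + t_y$ and a choice of row indices $\mathcal{I} \subseteq \{1,\ldots,m\}$ with $|\mathcal{I}| = t_x$ and column indices $\mathcal{J} \subseteq \{1,\ldots,n\}$ with $|\mathcal{J}| = t_y$, let $U_{\mathcal{I},\mathcal{J}}$ denote the matrices in $U$ whose unique minimal cover consists of exactly the rows in $\mathcal{I}$ and the columns in $\mathcal{J}$. Since the cover weight is invariant under permutations of rows and columns, such a permutation is a weight-preserving bijection of $\F_q^{m\times n}$ that carries $U_{\mathcal{I},\mathcal{J}}$ onto $U_{\mathcal{I}',\mathcal{J}'}$ whenever the index sets have matching sizes; thus $|U_{\mathcal{I},\mathcal{J}}|$ depends only on $(t_x,t_y)$, and I may compute it for the canonical choice $\mathcal{I} = \{1,\ldots,t_x\}$, $\mathcal{J} = \{1,\ldots,t_y\}$. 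This is precisely the configuration $E = \left(\begin{smallmatrix} A & B \\ C & 0 \end{smallmatrix}\right)$ analyzed in the proof of Theorem~\ref{theo:unique_cover}, whose counting of admissible blocks $A$, $B$, $C$ yields
$$|U_{\mathcal{I},\mathcal{J}}| \geq q^{t_x t_y}\left(q^{n-t_y} - (q-1)(n-t) - q^{t_x}\right)^{t_x}\left(q^{m-t_x} - (q-1)(m-t) - q^{t_y}\right)^{t_y}.$$
Because there are $\binom{m}{t_x}\binom{n}{t_y}$ admissible index pairs of the given sizes, summing over all of them and over all splittings $t_x + t_y = t$ gives exactly the right-hand side of the corollary.

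The main obstacle is guaranteeing that the summation is free of double counting, and this is exactly where the uniqueness established in Theorem~\ref{theo:unique_cover} is indispensable. If a matrix possessed two distinct minimal covers of size $t$, it would be attributed to two different blocks $U_{\mathcal{I},\mathcal{J}}$, these blocks would overlap, and the naive sum would overcount. Restricting to $U$ forces every element to have a single minimal cover, so the blocks $U_{\mathcal{I},\mathcal{J}}$ are pairwise disjoint with $U = \bigsqcup_{t_x+t_y=t}\bigsqcup_{\mathcal{I},\mathcal{J}} U_{\mathcal{I},\mathcal{J}}$, and the sum is a genuine lower bound. A minor accompanying point I would verify is that the per-block bound from Theorem~\ref{theo:unique_cover}, although derived in the permuted normal form, transfers to every index pair $(\mathcal{I},\mathcal{J})$ of the same sizes, since the count of admissible matrices $B$ and $C$ is manifestly invariant under the row and column permutations used to reach that form.
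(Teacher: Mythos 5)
Your proposal is correct and follows essentially the same route as the paper: the paper derives this corollary directly from Theorem~\ref{theo:unique_cover}, whose proof counts, for a fixed split $t = t_x + t_y$ and a fixed (canonical, up to permutation) choice of lines, the matrices whose unique minimal cover is exactly that set of lines, and then sums over the $\binom{m}{t_x}\binom{n}{t_y}$ line choices and all splits. Your explicit treatment of disjointness of the classes $U_{\mathcal{I},\mathcal{J}}$ — using uniqueness of the minimal cover to rule out double counting — is precisely the point the paper leaves implicit, so your writeup is a valid and faithful elaboration of the intended argument.
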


 \subsection{Prange-like Decoding in the Cover Metric}
 The classical algorithm in the Hamming metric by Prange, formulated via the generator matrix is such that, given a received word $r$, a generator matrix $G$ and a weight $t$, one chooses an information set $I$ and checks whether $r-r_IG_I^{-1}G$ has weight $t$. If the error vector has support outside the information set, this procedure succeeds. 
 
 In the following, we introduce and analyze a simple Prange-like generic decoder in the cover metric. 
 For this, we  focus on the case $ mR<n\leq m$.
 Let $\mathcal{C}=\left\{\sum_{i=0}^{k-1}u_iG_i\mid u_i\in\mathbb{F}_q\right\}$ with $G_i\in\mathbb{F}_q^{m\times n}$.
 In order to recover $C\in\mathcal{C}$ from $R=C+E$ with $\wt_C(E)=t$, one can proceed as follows:

\begin{algorithm}[h!]
\caption{Cover-Metric Prange Algorithm}\label{algo:prange}
\begin{flushleft}
Input: The generators $G_1, \ldots, G_k \in \mathbb{F}_q^{m \times n}$ of the code $\mathcal{C}$, the received matrix $R \in \mathbb{F}_q^{m \times n}$ and the cover weight $t \in \mathbb{N}$. \\ 
Output: $C \in \mathcal{C}$ with $\text{wt}_C(R-C) \leq t$.
\end{flushleft}
\begin{algorithmic}[1]
\State Choose $\hat{t}_x \leq m$ and set $\hat{t}_y = \left\lfloor n-\frac{k}{m-\hat{t}_x}\right\rfloor= n-\left\lceil \frac{k}{m-\hat{t}_x}\right\rceil$.
\State Choose  $\mathcal{I} \times \mathcal{J} \subset \{1, \ldots, m\} \times \{1, \ldots, n\}$ of size $(m-\hat{t}_x) \times (n-\hat{t}_y) =\tilde{m} \times \tilde{n}$ at random.
\State Compute $\bar{\varphi}(G_1, \ldots, G_k)= \bar{G}.$
\State Compute $\bar{\varphi}((G_1)_{\mathcal{I} \times \mathcal{J}}, \ldots, (G_k)_{\mathcal{I} \times \mathcal{J}})=\tilde{G} \in \mathbb{F}_q^{k \times  \tilde{m}\tilde{n}}$.
\State Choose a set $I \subset \{1, \ldots, \tilde{m}\tilde{n}\}$ of size $k$, such that $\text{rk}(\tilde{G}_I)=k.$
\State Compute $\bar{G}_s=\tilde{G}_I^{-1}\bar{G} \in \mathbb{F}_q^{k \times mn}$. 
\State  Compute  $r=\varphi(R_{\mathcal{I} \times \mathcal{J}})\in\F_q^{(m-\hat{t}_x)(n-\hat{t}_y)}$.
\State Compute $\hat{c}=r_I\bar{G}_s$. 
\If{ $\text{wt}_C(R- \varphi^{-1}(\hat{c})) \leq t$}{ Return $\hat{C} = \varphi^{-1}(\hat{c}).$}
\EndIf
\State Else, return  to Step 2.
\end{algorithmic}
\end{algorithm}
Similar to the classical case, we have that one iteration succeeds if the chosen information-block is such that $\text{wt}_C(E_{\mathcal{I} \times \mathcal{J}})=0.$
 \begin{remark}
 The number of erased columns in Step (2) is chosen such that $(m-t_x)(n-t_y)\geq k$.
 This guarantees that we obtain an information block with high probability. In fact, let $\tilde{m}$ and $\tilde{n}$ be integers such that $\tilde{m}\tilde{n}\geq k$.
 Further, let $\tilde{G}_i\in\mathbb{F}_q^{\tilde{m}\times\tilde{n}}$ for $i \in \{1,\ldots,k\}$ be chosen at random.
 Define $\tilde{\mathcal{C}}=\left\{u\cdot \bar{\varphi}(\tilde{G}_1,\ldots,\tilde{G}_k)\mid u\in\mathbb{F}_q^k\right\}$.
 Then,  
 $$
 \Prob(|\tilde{\mathcal{C}}|=q^k) = \Prob(\text{rk}(\bar{\varphi}(\tilde{G}_1,\ldots,\tilde{G}_k)=k)  = \prod_{s=0}^{k-1} \left(1-q^{s-\tilde{m}\tilde{n}}\right).
 $$
 This probability is  at least as high as the probability that $k$ random positions of a random code form an information set.
 We conclude that the probability that a random block of size $\tilde{m}\tilde{n}\geq k$ is an information block is, for most codes, a constant in the same order of magnitude as $1$ \cite{horlemann2021information}.
 \end{remark}
 Now, we analyze the average success probability of the Prange-like decoder and, in doing so, determine according to which probability mass function the $\hat{t}_x$ should be chosen in order to obtain the maximum average success probability.
 Let $t_x$ be the number of rows in the minimum-size cover of $E$ and $t_y$ the number of columns.
 According to the definition of the cover weight, we have $\wt_C(E) = t = t_x +t_y$.
  Except for a negligible number of cases, the Prange-like decoder only succeeds if indeed all rows and columns of $E$, which have non-zero weight, are erased.
 Therefore, the probability of the event that $\hat{C}$ is equal to $C$ given $t_x$ and $\hat{t}_x$ is tightly approximated by
 \begin{equation*}
  \Prob(S\mid t_x,\hat{t}_x) =  \frac{\binom{\hat{t}_x}{t_x}}{\binom{m}{t_x}}\cdot \frac{\binom{\hat{t}_y}{t_y}}{\binom{n}{t_y}}.
 \end{equation*}
 In order to enable a simple complexity analysis of the Prange-like decoder, we assume the simple error model introduced in Definition \ref{def:simple_err}, i.e., the $t$ rows and columns in $E$ which contain non-zero elements are picked at random out of the $m+n$ available rows and columns.
 Theorem \ref{theo:unique_cover} proves that  a matrix $E$, which is generated in this way, is rarely rejected.
 Therefore, one can conclude that the distribution of $t_x$ is well approximated by the hypergeometric distribution, which means that its probability mass function is given by 
 \begin{equation*}
  \Prob(t_x\mid t) = \frac{\binom{m}{t_x}\binom{n}{t-t_x}}{\binom{m+n}{t}} =  \frac{\binom{m}{t_x}\binom{n}{t_y}}{\binom{m+n}{t}}.
 \end{equation*}
 Using the law of total probability, the success probability given $\hat{t}_x$ is calculated as
 \begin{equation*}
  \Prob(S\mid \hat{t}_x) = \sum_{t_x = 0}^{t} \Prob(S\mid \hat{t}_x, t_x) \Prob(t_x\mid t) = \sum_{t_x = 0}^{t} \frac{\binom{\hat{t}_x}{t_x}\binom{\hat{t}_y}{t-t_x}}{\binom{m+n}{t}}.
 \end{equation*}
 Applying Vandermonde's identity, the expression is simplified to 
  \begin{equation*}
  \Prob(S\mid \hat{t}_x) = \frac{\binom{\hat{t}_x+\hat{t}_y}{t}}{\binom{m+n}{t}}.
 \end{equation*}
 Hence, the overall success probability is given by
 \begin{equation*}
  \Prob(S) = \binom{m+n}{t}^{-1}\cdot\sum_{\hat{t}_x = 0}^{t} \binom{\hat{t}_x+\hat{t}_y}{t}\cdot \Prob(\hat{t}_x),
 \end{equation*}
 where $ \Prob(\hat{t}_x)$ should be chosen such that $\Prob(S)$ is maximized as this minimizes the average number of required iterations.
 Formally, this optimization problem can be stated as 
 \begin{equation*}
  \Prob(\hat{t}_x) = \argmax_{\substack{p_{\hat{t}_x}: \sum_{\hat{t}_x} p_{\hat{t}_x}=1\\p_{\hat{t}_x}\geq0 }} \ \sum_{\hat{t}_x = 0}^{t} \binom{\hat{t}_x+\hat{t}_y}{t}\cdot p_{\hat{t}_x}.
 \end{equation*}
 Using a standard Lagrange multiplier argument or some intuition, one finds that one solution is always given by
 \begin{equation*}
  \Prob(\hat{t}_x) = \begin{cases}
                    1, & \,  \text{if $\hat{t}_x\in\{0,\ldots,t\}$ is one particular $\hat{t}_x$ which maximizes $\binom{\hat{t}_x+\hat{t}_y}{t}$,} \\
                    0, & \, \text{else.}
                 \end{cases}
 \end{equation*}
 Let $t^*_x$ be the chosen $\hat{t}_x\in\{0,\ldots,t\}$ which maximizes $\binom{\hat{t}_x+\hat{t}_y}{t}$, i.e., $\hat{t}_x+\hat{t}_y$. 
 Then, due to $\hat{t}_y=n-\left\lceil\frac{k}{m-\hat{t}_x}\right\rceil$, the success probability of the Prange-like decoder is given by
 \begin{equation}\label{eq:succ_prob_prange_like}
  \Prob(S) = \frac{\binom{\left\lfloor t^*_x+n-\frac{k}{m-t^*_x}\right\rfloor}{t}}{\binom{m+n}{t}}
 \end{equation}
 In order to bound the probability given in \eqref{eq:succ_prob_prange_like} from above, one can relax the requirement that $t^*_x$ has to be an integer.
 Let $f(x)$ be the function 
 \begin{align*}
  f(x)\colon [0, m] &\to \mathbb{R}, \\  x & \mapsto x+n-\frac{k}{m-x}.
 \end{align*}
 Showing that $f(x)$ attains its maximum value at $x^*=m-\sqrt{k}$ is straightforward.
 It follows that 
 \begin{equation*}
  \left\lfloor t^*_x+n-\frac{k}{m-t^*_x}\right\rfloor \leq \left\lfloor f(x^*)\right\rfloor = \left\lfloor m+n-2\sqrt{k}\right\rfloor = m+n-\left\lceil2\sqrt{k}\right\rceil.
 \end{equation*}
 In order to bound the probability given in \eqref{eq:succ_prob_prange_like} from below, one can choose $\hat{t}_x = m-\left\lceil\sqrt{k}\right\rceil$.
 Note that for this choice of $\hat{t}_x$, $ mR < n \leq m$ ensures that  $\hat{t}_y = n-k/\lceil\sqrt{k}\rceil\approx n-\sqrt{k} \geq 0$.

 Then, due to 
 \begin{equation*}
\left\lceil\frac{k}{\left\lceil\sqrt{k}\right\rceil} + \left\lceil\sqrt{k}\right\rceil\right\rceil
\leq \left\lceil\frac{k+(\sqrt{k}+1)^2}{\sqrt{k}}\right\rceil
\leq \left\lceil2\sqrt{k}\right\rceil+3,  
 \end{equation*}
it holds that
 \begin{equation*}
  \left\lfloor \hat{t}_x+n-\frac{k}{m-\hat{t}_x}\right\rfloor  =  \left\lfloor m-\left\lceil\sqrt{k}\right\rceil+n-\frac{k}{\left\lceil\sqrt{k}\right\rceil}\right\rfloor \geq m+n -\left\lceil2\sqrt{k}\right\rceil-3.
\end{equation*}
In summary, the success probability can be tightly bounded as
\begin{equation*}
 \binom{ m+n -\left\lceil2\sqrt{k}\right\rceil-3}{t} \leq  \Prob(S) \cdot\binom{m+n}{t} \leq \binom{ m+n -\left\lceil2\sqrt{k}\right\rceil}{t}.
\end{equation*}
It is worth noting that the upper bound on the success probability is similar to the formula for the success probability of Prange's algorithm in the Hamming metric, i.e.,
\begin{equation*}
 \Prob(S_\mathrm{Prange}) = \binom{n-k}{t}\binom{n}{t}^{-1}.
\end{equation*}
Let $\tau = t/n$ and $H_q(p)$ denote the $q$-ary entropy function.
Then, asymptotically, Prange's algorithm in the Hamming metric requires on average $2^{c_\mathrm{Prange}(R,\tau)\cdot n}$ iterations, where
\begin{equation*}
 c_\mathrm{Prange}(R,\tau) = H_2\left(\tau \right)-(1-R)\cdot H_2\left(\tau/(1-R) \right).
\end{equation*}
For $\tau_\mathrm{GV} = \frac{d_\mathrm{GV}-1}{2n}$, the constant is given by
\begin{equation*}
 c_\mathrm{PrangeGV}(R) = c_\mathrm{Prange}(R,\tau_\mathrm{GV}) = H_2\left( H_q^{-1}(1-R)/2\right) - (1-R)\cdot H_2\mleft(\frac{H_q^{-1}(1-R)}{2(1-R)}\mright).
\end{equation*}

A similar analysis can be carried out for the proposed Prange-like decoder in the cover metric.
 \begin{theorem}[Asymptotic Complexity of the Prange-like Decoder]\label{asymp_prange}
  Let $m = n$, $\tau = t/n$ and $R=k/(nm)$.
  Then, the number of iteration is asymptotically in the order of $2^{c_\mathrm{cover}(R,\tau)\cdot(n+m)}$
  with
  \begin{equation}\label{eq:complexity_c_cover}
   c_\mathrm{cover}(R,\tau) = H_2\mleft(\frac{\tau}{2}\mright) - \left(1-\sqrt{R}\right)\cdot H_2\mleft(\frac{\tau}{2\cdot(1-\sqrt{R})}\mright),
  \end{equation}
  where  $H_2(p)$ denotes the binary entropy function.
  For $\tau_\mathrm{coverGV} = \frac{d_\mathrm{coverGV}-1}{2n}$, the constant simplifies to
  \begin{equation*}
    c_\mathrm{coverGV}(R) 
    = c_\mathrm{cover}(R,\tau_\mathrm{coverGV}) 
    = H_2\mleft(\frac{1-R}{4}\mright) -\left(1-\sqrt{R}\right) H_2\mleft(\frac{1+\sqrt{R}}{4}\mright).
  \end{equation*}
    The cost of each iteration is dominated by the required Gaussian elimination, i.e., it is in the order of $n^3$.
 \end{theorem}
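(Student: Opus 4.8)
The plan is to identify the expected number of iterations with the reciprocal of the per-iteration success probability $\Prob(S)$ (a geometric waiting time), and then to extract its exponential growth rate from the two-sided bound
\[
\binom{m+n-\lceil 2\sqrt{k}\rceil-3}{t}\leq \Prob(S)\cdot\binom{m+n}{t}\leq \binom{m+n-\lceil 2\sqrt{k}\rceil}{t}
\]
established immediately before the statement. First I would substitute the regime $m=n$, $k=Rn^2$ and $t=\tau n$, so that $m+n=2n$, $\lceil 2\sqrt{k}\rceil = 2\sqrt{R}\,n+O(1)$, and the outer binomial index becomes $m+n-\lceil 2\sqrt{k}\rceil = 2n(1-\sqrt{R})+O(1)$.

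Next I would apply the entropy estimate $\tfrac1N\log_2\binom{N}{\alpha N}\to H_2(\alpha)$ (valid since the arguments below lie in $(0,1)$ in the considered range) to both binomials. For the denominator, $\log_2\binom{2n}{\tau n}\sim 2n\,H_2(\tau/2)$; for the numerator, writing $N=2n(1-\sqrt{R})$ and $\tau n=\tfrac{\tau}{2(1-\sqrt R)}N$, one gets $\log_2\binom{N}{\tau n}\sim 2n(1-\sqrt R)\,H_2\!\left(\tfrac{\tau}{2(1-\sqrt R)}\right)$. Subtracting gives
\[
-\log_2\Prob(S)\sim 2n\Bigl(H_2(\tfrac{\tau}{2})-(1-\sqrt R)\,H_2\bigl(\tfrac{\tau}{2(1-\sqrt R)}\bigr)\Bigr)=c_\mathrm{cover}(R,\tau)\,(n+m),
\]
so that $1/\Prob(S)=2^{c_\mathrm{cover}(R,\tau)(n+m)}$ to first exponential order. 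The point that must be handled with care here is that the additive constants—the $O(1)$ shifts in the binomial indices, the floor/ceiling rounding, and the $-3$ gap between the lower and upper bound—only multiply $\Prob(S)$ by subexponential factors: indeed $\binom{N-3}{t}/\binom{N}{t}\to(1-t/N)^3$ is a constant, so both sides of the sandwich collapse to the single exponent $c_\mathrm{cover}(R,\tau)$.

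For the Gilbert--Varshamov specialization I would invoke Theorem \ref{asympt_GV}, which yields $\delta_\mathrm{coverGV}=1-R$ and hence $d_\mathrm{coverGV}=(1-R)n$, so $\tau_\mathrm{coverGV}=(d_\mathrm{coverGV}-1)/(2n)\to(1-R)/2$. Substituting $\tau=(1-R)/2$ into \eqref{eq:complexity_c_cover} gives $\tau/2=(1-R)/4$ in the first term, while for the second term the factorization $1-R=(1-\sqrt R)(1+\sqrt R)$ yields $\tfrac{\tau}{2(1-\sqrt R)}=\tfrac{1-R}{4(1-\sqrt R)}=\tfrac{1+\sqrt R}{4}$, producing exactly the claimed $c_\mathrm{coverGV}(R)=H_2\!\left(\tfrac{1-R}{4}\right)-(1-\sqrt R)\,H_2\!\left(\tfrac{1+\sqrt R}{4}\right)$.

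Finally, for the per-iteration cost I would note that every step of Algorithm \ref{algo:prange} apart from Steps~5 and~6 is a single pass over matrices of size polynomial in $n$, whereas Steps~5 and~6 amount to computing the systematic form $\bar G_s=\tilde G_I^{-1}\bar G$, i.e.\ a Gaussian elimination whose cost is polynomial in $n$ (cubic in the relevant matrix dimension), matching the stated order $n^3$. Since this factor is subexponential, it does not alter the exponential rate, and the total cost is $2^{c_\mathrm{cover}(R,\tau)(n+m)}$ up to a polynomial factor. I expect the only genuine subtlety to be the asymptotic bookkeeping of the second paragraph—verifying that the rounding and the constant shifts between the two bounds are asymptotically negligible—since every other step is a direct substitution into the entropy approximation or an elementary algebraic identity.
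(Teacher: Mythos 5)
Your proposal is correct and follows essentially the same route as the paper's proof: inverting the sandwich bound on $\Prob(S)$, applying the entropy approximation $\log_2\binom{N}{\alpha N}\sim N H_2(\alpha)$ to both binomials with $m+n-\lceil 2\sqrt{k}\rceil \approx (n+m)(1-\sqrt{R})$, and substituting $\tau_\mathrm{coverGV}=(1-R)/2$ from the Gilbert--Varshamov bound. If anything, you are more careful than the paper, which asserts without justification that the lower and upper bounds ``coincide'' asymptotically and omits the algebra $1-R=(1-\sqrt{R})(1+\sqrt{R})$ for the GV specialization, both of which you spell out.
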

 \begin{proof}
 The average number of required iterations is given by $\Prob(S)^{-1}$.
 For $m=n \rightarrow \infty$, the lower bound and the upper bound on $\Prob(S)$ coincide, i.e.,
 \begin{equation*}
  O\left(\Prob(S)^{-1}\right) = O\left(\binom{n+m}{t}\binom{n+m-2\sqrt{R mn}}{t}^{-1}\right)
 \end{equation*}
 As we consider the case $m=n$, it holds that $\sqrt{m\cdot n}=\frac{m+n}{2}$,  which simplifies the previous equation to
 \begin{equation*}
 O\left(\Prob(S)^{-1}\right) = O\left(\binom{n+m}{t}\binom{(n+m)(1-\sqrt{R})}{t}^{-1}\right).
 \end{equation*}
 Further, a well-known formula for binomial coefficients states that $O\left(\binom{k}{j}\right)=\tilde{O}\left(2^{H_2(j/k)\cdot k}\right)$.
 Using this relation, one observes that
 \begin{equation*}
  O(\Prob(S)^{-1}) = \tilde{O}\mleft(
  2^{(m+n)H_2\mleft(\frac{t}{n+m}\mright)} \cdot 2^{-(m+n)(1-\sqrt{R})H_2\mleft(\frac{t}{(n+m)(1-\sqrt{R})}\mright)}
  \mright),
 \end{equation*}
 which proves \eqref{eq:complexity_c_cover}.
 In the cover metric, the Gilbert-Varshamov bound states that 
 \begin{equation*}
  \lim_{n \to \infty}\tau_\mathrm{coverGV} = \lim_{n \to \infty}\frac{\left\lfloor (d_\mathrm{coverGV}-1)/2\right\rfloor}{n} = \lim_{n \to \infty} \frac{n(1-R)}{2n} = \frac{1-R}{2}.
 \end{equation*}
Thus, we get the claim.  
 \end{proof}

\begin{remark}
It is worth noting that $O(\Prob(S)) = \tilde{O}(2^{(n+m)\cdot c_\mathrm{coverGV}(R) })$ implies
$$\lim\limits_{n=m \to \infty}\frac{\log_2\left(\Prob(S)^{-1}\right)}{n\cdot m} = 0.$$
This behaviour is different from generic decoding in other metrics.
In the Hamming metric, generic decoding of a random binary code of length $n\cdot m$ succeeds with a probability, for which it holds that
$$\lim\limits_{n=m \to \infty}\frac{\log_2\left(\Prob(S_H)^{-1}\right)}{n\cdot m} = c_H(R),$$
where $c_H(R)$ is a constant which varies only slightly between the decoding algorithms.
\end{remark}

 In the following we consider a possible Stern-like adaption of our algorithm.
 However, similar to generic rank-metric decoders, we observe that the list sizes involved in such an algorithm would be too large to give an improvement on the cost. 
 The idea of Stern's algorithm in the Hamming metric is to split the error vector in the information set into two parts. We then assume that the Hamming weight in each part is $v/2$ and go through all lists of such error vectors solving the partial syndrome equations. The list sizes are then one of the factors in the cost of one iteration. 
 One can do a similar approach in the cover metric, by building the list $\mathcal{L}_1$ consisting of all matrices, where we have $\sqrt{k+\ell}$ rows of cover weight $v/2$ and the list $\mathcal{L}_2$ consisting of all matrices with $\sqrt{k+\ell}$ columns of cover weight $v/2$. Both lists then have size 
 $$\binom{\sqrt{k+\ell}}{v/2} \left(q^{\sqrt{k+\ell}}-1 \right)^{v/2}.$$
 Assuming as in Theorem \ref{asymp_prange}, that $m=n$, we get  $\sqrt{k+\ell}=\sqrt{R  n^2+\ell} \geq \sqrt{R}n$ and  can bound 
 $$\binom{\sqrt{k+\ell}}{v/2}   q^{\frac{\sqrt{k+\ell}  v}{2}} \geq \binom{\sqrt{R}n}{v/2}   q^{\frac{n \sqrt{R}  v}{2}},$$
 which implies for large $n$ that
 $$ \log_2\mleft( \binom{\sqrt{k+\ell}}{v/2}   q^{\frac{\sqrt{k+\ell}  v}{2}} \mright)> \frac{n \sqrt{R}  v}{2}\log_2(q).$$
 We observe that the size is considerably increased in comparison to the Hamming metric, especially for $q>2$.
 
 One can verify that the list size is larger than the cost of the Prange-like decoder for relevant rates.
 Thus, we will not decrease the cost of the Prange-like decoder in the cover metric when following this idea. 

 \subsection{Generic Decoding based on the Rank Metric}
 
The Lowest Cover Weight Codeword Problem also has a clear connection to the well-known MinRank Problem, which has been shown to be NP-complete in \cite{minrank}.
 
 \begin{problem}[MinRank Problem]\label{minrank}
  Given $G_1, \ldots, G_k \in \mathbb{F}_q^{m \times n}$, a positive integer $t\leq \min\{m,n\}$, does there exist $(x_1, \ldots, x_k) \in \mathbb{F}_q^k$ such that
 $$\text{wt}_R(x_1G_1 + \cdots + x_kG_k) \leq t?$$
 \end{problem}
Observe  that for any $A \in \mathbb{F}_q^{m \times n}$ we have 
$\text{wt}_R(A) \leq \text{wt}_C(A).$
 If one can solve the Lowest Cover Weight Codeword Problem and find  a solution $x_1, \ldots, x_k \in \mathbb{F}_q$, this then also forms a solution to the MinRank problem as
 $$\text{wt}_R(x_1 G_1 + \cdots +x_kG_k) \leq \text{wt}_C(x_1 G_1 + \cdots +x_kG_k)\leq t.$$
 This will be useful to compare our decoding algorithm with the already known algorithms for the MinRank problem \cite{bardet}. In fact, giving such MinRank solver the inputs $(G_1, \ldots, G_k,t)$ the solver will output a list of solutions $\mathcal{L}=\{x_1, \ldots, x_N\}$. Since we assume that the generating matrices $G_i \in \mathbb{F}_q^{m \times n}$ are chosen uniformly at random, the expected number of solutions to the MinRank problem is $$N=\frac{q^{t(m+n-t)}}{q^{mn-k}}.$$ 
 Observe that for $t= n(1-R)/2$ and $m=n$, we get
 $$N=\frac{q^{t(m+n-t)}}{q^{mn-k}}=q^{-n^2(1-R)^2/4},$$ which tends to 0 for $n$ going to infinity. 
 This is due to 
 $$\lvert\mathcal{C}\cap\{A\in\mathbb{F}_q^{m\times n}\mid \wt_R(A-E)\leq t\}\rvert\geq N,$$
see, e.g., \cite{ding2014list}. Hence, asymptotically, we will only have one solution in the list.
One would then go through the list and check whether $\text{wt}_C(x_i)\leq t.$ For one such $x_i$ the cost of computing the cover weight is in $O\left((m+n)n^2\right)$.
 However, the generic decoders in the rank metric have a cost that is much larger than our cost. For example \cite{grs} has an asymptotic cost of $q^{n^2TR},$ for 
\begin{align*}
    R := \lim_{n \to \infty} \frac{k(n)}{n} \in (0,1), \
    T := \lim_{n \to \infty} \frac{t(n)}{f(n)} \in (0,1).
\end{align*}
In  \cite{bardet}, the authors provided an algorithm to solve the MinRank problem (Problem \ref{minrank}). The cost of this algorithm is however at least  $\binom{n}{t}^2$, which is much larger than the cost of our algorithm, which is less than $\binom{m+n}{t}.$ Thus, we conclude that using a MinRank solver on a cover-metric decoding instance will not provide a faster algorithm than our newly proposed generic decoder in the cover metric. 

\section{Conclusion}\label{sec:conclusion}
The cover metric lies in between the Hamming and the rank weight and could thus potentially profit from the benefits of both metrics. 
In this paper, we provided a polynomial-time reduction from the decoding problem in the Hamming metric to the decoding problem in the cover metric and, hence, showed the NP-hardness of the decoding problem for $\mathbb{F}_q$-linear codes. 
This similarity to the Hamming metric and the rank metric could enable cryptographic applications. 

Also note that random codes in the cover metric have with high probability the largest possible minimum cover distance, which makes this metric unique. 

To assess the cost of the cover-metric decoding problem, we provided a generic decoder and analyzed its cost. An asymptotic analysis shows that the decoding problem in the cover metric is exponential in the number of lines. This behaviour is completely different to the Hamming metric and the rank metric, which have a complexity that is exponential in the number of code symbols. 

\subsection*{Acknowledgements}
The first author acknowledges the financial support by the Federal Ministry of Education and Research of Germany in the programme of "Souverän. Digital. Vernetzt.". Joint project 6G-life, project identification number: 16KISK002.
The fourth author  is  supported by the Swiss National Science Foundation grant number 195290.

\bibliographystyle{plain}
\bibliography{references}

\end{document}